\setlist[enumerate]{leftmargin=.5in}
\setlist[itemize]{leftmargin=.5in}
\crefname{hypothesis}{Hypothesis}{Hypotheses}
\title{Inverse Gaussian Process regression for likelihood-free inference\thanks{\today.
\funding{This work was partially 
supported by the NSFC under grant number 11301337.}}}
\author{Hongqiao Wang\thanks{School of Mathematics and Statistics, Central South University, Changsha, China.}
\and Ziqiao Ao\thanks{School of Mathematics, University of Birmingham, Edgbaston, Birmingham B15 2TT, UK.}
\and Tengchao Yu \thanks {School of Mathematical Sciences,  
Shanghai Jiao Tong University, 800 Dongchuan Rd, Shanghai 200240, China.}
\and Jinglai Li\thanks{Corresponding author, School of Mathematics, University of Birmingham, Edgbaston, Birmingham B15 2TT, UK.
Email: \email{j.li.10@bham.ac.uk}.}}
\DeclareMathAlphabet\mathpzc{OT1}{pzc}{m}{it}
\let\mathcal=\mathpzc
\def\gp{{\mathrm{GP}}}
\numberwithin{equation}{section}
\def\intinfty{\int\limits_{\!\!-\infty\,\,}^{\,\,\infty\!\!}\kern-0.0em}
\def\iintinfty{\mathop{\int\!\!\int}\limits_{\!\!-\infty\,\,}^{\,\,\infty\!\!}\kern-0.0em}
\def\iiintinfty{\mathop{\int\!\!\int\!\!\int}\limits_{\!\!-\infty\,\,}^{\,\,\infty\!\!}\kern-0.0em}
\def\~#1{{\-ox{\sf#1}}}
\def\btheta{{\bm{\theta}}}
\newtheorem{assumption}[theorem]{Assumption}
\def\@#1{{\cal #1}}
\begin{document}

\maketitle

\begin{abstract}
  In this work we consider Bayesian inference problems with intractable likelihood functions.
We present a method to compute an approximate of the posterior with a limited number of model simulations. 
 The method features an inverse Gaussian Process regression (IGPR), i.e., one from the output of a simulation model to the input of it.  
Within the method, we provide an adaptive algorithm with a tempering procedure to construct the approximations of the marginal posterior distributions. 
With examples we demonstrate that IGPR has a competitive performance compared to some commonly used algorithms, 
especially in terms of statistical stability and computational efficiency,
while the price to pay is that it can only compute a weighted Gaussian approximation of the marginal posteriors. 
\end{abstract}

\begin{keywords}
  Approximate Bayesian computation,
Bayesian inference,
Gaussian process regression,
likelihood function
\end{keywords}

\begin{AMS}
  62F15  65C05
\end{AMS}
\section{Introduction}
\label{sec:intro}
Bayesian inference~\cite{gelman2014bayesian} provides a natural framework 
for estimating parameters of interest (PoI) from indirect, incomplete and noisy observations. 
A major advantage of the Bayesian framework is that the posterior distribution 
is a probabilistic representation of the PoI, 
and thus can characterize the uncertainty information in the inference results. 
Computing the posterior distribution, i.e. the distribution of the PoI conditional on the observations, is therefore a central task of Bayesian inference. 
In many practical problems, often the likelihood function is intractable (namely, evaluating the likelihood function 
is either computationally prohibitive or simply impossible)~\cite{sisson2018handbook},
rendering most commonly used methods, such the Markov chain Monte Carlo (MCMC) simulation~\cite{andrieu2003introduction}
and the variational Bayes method~\cite{fox2012tutorial,blei2017variational} infeasible. 
Likelihood-free inference methods, which do not require a tractable form of the likelihood function,  are needed  to address such problems.  

Considerable efforts have been devoted to the development of  approximate inference methods and 
in what follows we briefly overview three classes of approximate inference methods, 
which use different strategies to avoid the likelihood evaluation. 
First the approximate Bayesian computations (ABC)~\cite{beaumont2002approximate,sisson2018handbook} are 
probably the most popular class of approximate inference methods. 
The main idea behind ABC is to generate samples from a prescribed scheme (e.g. from the prior distribution)
and then determine whether a sample of the POI can be accepted
as a posterior sample based on the discrepancy between the data simulated from the sample and the observed data.
The ABC methods have been successfully applied to a wide range of practical problems,
but they typically require a large number of simulations of the underlying mathematical model, 
which makes them highly costly for problems with computationally intensive mathematical models. 
Another often used strategy is to construct computationally inexpensive surrogate or approximate model of the likelihood or a related function from the simulated data,
and then use the resulting surrogate models in a standard  posterior computation such as MCMC. 
A typical example of such methods is the synthetic likelihood~\cite{wood2010} that approximates
the likelihood function from a normal density estimate for the summary statistics and others such as \cite{thomas2020likelihood,papamakarios2019sequential}.
Finally in this work our interest lies in a type of  methods that aim to directly approximate the posterior distribution from the simulated samples. 
Specifically they assume a certain conditional distribution density model for the posterior, 
and then train the conditional density model through the simulated samples. 
In what follows we shall refer to such approaches as the posterior approximation methods. 
Works along this line include~\cite{lueckmann2017flexible,papamakarios2016fast}, just to name a few.
 Note that  in the conditional density models the input is the observed data and the output of it is the distribution of POI. 
Since in this type of methods the conditional density models are often constructed with complex models such as artificial neural networks, 
training such models are often computationally intensive. 

In many practical problems often the users may only be interested in the marginal posteriors of each parameters, rather than the joint 
distribution as a whole.
The goal of this work is to propose a  method to compute the marginal posterior distributions which can be efficiently and stably implemented.  
The basic idea of the proposed method is to use Gaussian Process (GP) regression~\cite{sacks1989design,kennedy2001bayesian,williams2006gaussian} as the conditional density model for approximating the actual posterior. 
Unlike many usual regression methods that provide a point estimate of the responses, 
the GP method computes a Gaussian distribution of the response conditional on the predictors,
which provides a natural framework to approximate  the posterior. 
It should be noted that the GP methods have been previously used to accelerate the posterior computation in several 
works~\cite{kandasamy2015bayesian,wang2018adaptive,acerbi2018variational}. 
In these works,  GP is mainly used as a regression tool, constructed to approximate the likelihood function or the log-likelihood function:
 namely the predictor is the PoI and the response is the value of the (log)-likelihood. As a result these methods can not handle problems where the likelihood function is intractable without extra measures. 
In the present work we use the GP model to directly approximate the marginal posterior distribution,  which is a mapping from observation data to 
the distribution of PoI.
As will be discussed later, in the present setup, the data and the POI are respectively the output and the input of a simulation model,
and so our method essentially constructs a GP regression from the output of a model to the input of it. 
For this output-to-input structure we refer to the proposed method as Inverse Gaussian Process Regression~(IGPR). 

In all the posterior approximation methods including IGPR, one of the most important issues is to construct the training set~\cite{williams2006gaussian}, i.e. to determine locations where to perform the simulation of the mathematical model.
Since we consider problems with expensive simulation models, determine the sampling points are essential 
for the computational efficiency. 
This task, however, is particularly challenging in the present problems,
 as the inputs of the condition distribution model (namely, the GP model in our method) are actually the output of a simulation model, which makes it impossible to
 draw a sample from an exact location even if such a location is specified. To address the issue, we design an adaptive algorithm to generate sample points and update the resulting GP approximation of the posterior. 
Since the training process of GP is less costly especially as we use the local GP model that will be explained later, 
the IGPR method is usually more efficient to implement than those based on artificial neural networks,
while the price we pay  is that IGPR can only approximate the marginal posteriors. 
Theoretically we are able to show that the IGPR method can correctly recover the mean  of the marginal posterior distribution in 
the large sample limit. 
Finally our numerical experiments demonstrate that, IGPR has competitive performance in terms of accuracy,
 it is more statistically robust, in the sense that its results are subject to less statistical fluctuations,
 and finally it is more computationally efficient than the methods based on artificial neural networks.

The rest of the paper is organized as follows. In Section~\ref{sec:method} we discuss the 
setup of the approximate inference problems, and provide some preliminaries that will be used in the proposed method.
In Section~\ref{sec:igpr} we describe the proposed IGPR method in details and in Section~\ref{sec:examples} we provide some examples to demonstrate its performance. Finally we offer some closing remarks in Section~\ref{sec:conclusions}.

\section{Problem setup and preliminaries}
\label{sec:method}
In this section, we discuss some important preliminaries of the proposed method,
and we start with the basic setup of the inference problems that we want to solve.

\subsection{The likelihood-free Bayesian inference problem}
Let us consider a Bayesian inference problem: let $\btheta\in  R^{n_\theta}$ be the parameters of interest, 
and $\-d\in R^{n_d}$ be the observed data, 
and we want to compute the posterior distribution:  
\begin{equation}
\pi(\btheta|\-d) =\frac{\pi(\-d|\btheta)\pi(\btheta)}{\pi(\-d)},\label{e:posterior}
\end{equation}
where $\pi(\-d|\btheta)$ is the likelihood function and $\pi(\btheta)$ is the prior distribution on $\btheta$.
In certain problems, the likelihood function can be directly evaluated, and in this case 
the posterior samples can be drawn using the MCMC simulation. 
In this work, however, we assume that evaluation of the likelihood function $\pi(\-d|\btheta)$ is infeasible,
while it is possible to generate data from the likelihood function for a given value of $\btheta$. 
A typical setup for such problems is that the data is a function of both the parameter $\btheta$ and certain random variables $\bm{\xi}$:
\begin{equation}
\-d=G(\btheta,\bm{\xi}),
\end{equation}
where the distribution of $\bm\xi$ is known.
In some situations $\bm\xi$ can be a stochastic process or a random field. 
It should be clear that when the model $G(\cdot,\cdot)$ is complex, for example described by a nonlinear differential equation system, it is extremely difficult to 
derive the associated likelihood function. 
This type of problems appear quite often in the field of computational biology ranging from population dynamics to biochemical networks~\cite{lillacci2010parameter,chou2009recent}. 
On the other hand, in such problems, it is required that for a given parameter $\hat{\btheta}$, synthetic data  can be generated 
by performing a simulation of the underlying mathematical model $G(\hat{\btheta},\bm{\xi})$. 
To avoid confusion, in what follows we refer to the synthetic data as \textit{samples}, denoted as $\hat{\-d}$, and the observed data as 
\textit{data}, denoted as $\tilde{\-d}$.

\subsection{The approximate Bayesian Computation}
Due to the intractability of the likelihood function, it is not possible to compute the posterior distribution~\eqref{e:posterior} 
exactly. Instead, we seek to find an approximate posterior distribution for the problem. 
As is discussed in Section~\ref{sec:intro}, one popular approach to do this is the ABC method.
The basic idea of ABC is to approximate the intractable likelihood function by
\begin{equation}
\pi_\epsilon(\tilde{\-d}|\btheta) = \int \delta_\epsilon(\rho(\tilde{\-d},\hat{\-d})) \pi(\hat{\-d}|\btheta) d\-d
\end{equation}
where $\delta_\epsilon(\cdot)$ is an indicator function: $\delta_\epsilon(z)=1$ if $z\leq\epsilon$ and $\delta_\epsilon(z)=0$ otherwise,
 $\rho(\cdot,\cdot)$ is a distance measure of two data points, and $\epsilon>0$ is a prescribed threshold.
 Throughout this paper $\rho$ is taken to be the Euclidian distance unless otherwisely stated. 
It should be clear that $\pi_\epsilon(\tilde{\-d}|\btheta)=\pi(\tilde{\-d}|\btheta)$ for $\epsilon=0$.
The posterior distribution $\pi(\btheta|\tilde{\-d})$ can thus be approximated by
\begin{equation}
\pi_\epsilon(\btheta|\tilde{\-d}) = \frac1{\pi_\epsilon(\tilde{\-d})}\int \delta_\epsilon(\rho_{abc}(\tilde{\-d},\hat{\-d})) \pi(\hat{\-d}|\hat{\btheta})\pi(\btheta) d\-d,
\end{equation}
where $\pi_\epsilon(\tilde{\-d})$ is the normalization constant. 
The very basic ABC rejection algorithm generates a set of approximate posterior samples by repeatedly performing the following steps, 
\begin{enumerate}
\item draw $\hat{\btheta}$ from $\pi(\btheta)$;
\item  generate $\hat{\bm{d}}\sim \pi(\cdot|\hat{\btheta})$;
\item if $\rho(\tilde{\bm{d}},\hat{\bm{d}})<\epsilon$, accept $\hat{\btheta}$.
\end{enumerate}
We reinstate that the procedure given above is the very basic version of ABC with rejection (ABC-REJ), 
and more sophisticated ABC algorithms such as the ABC with MCMC~\cite{marjoram2003markov} and the ABC with sequential Monte Carlo (ABC-SMC)~\cite{sisson2007sequential} are also available.
We refer to the review articles~\cite{beaumont2002approximate,sisson2018handbook} as well as the references therein for more information on the ABC algorithms. 


\subsection{The GP regression with noise} 
\label{sec:GPR}
We now provide a very brief introduction to  the GP regression, which is the main tool of the proposed method, 
and we want to note here that the GP model is slightly modified for the use in approximate inference.  
Simply speaking the GP regression performs a nonparametric regression in a Bayesian framework~\cite{williams2006gaussian}.
Specifically, given $m$ data pairs $D=\{(\-x^*_i,y^*_i)\}_{i=1}^m$
with $\-x\in R^{n_x}$ and $y\in R$ being the explanatory and response variables respectively,
the task is to predict the value of $y$ at a new point ${\-x}$,
and more precisely we want to compute the conditional distribution $\pi(y|{\-x},D)$.  
A standard model for regression is to assume that the explanatory and response variables are related via an underlying function: 
\begin{equation}
y = f(\-x)+{\zeta},
\end{equation}
where $\zeta$ is an observation noise following $N(0,\nu)$. 
The main idea of the GP method is to assume that the underlying function $f(\-x)$ is a Gaussian random field defined on $R^{n_x}$,
whose mean is $\mu(\-x)$ and covariance is specified by
a kernel function $k(\-x,\-x')$, 
 namely,
\[ \mathrm{cov}[f(\-x),f(\-x')] = k(\-x,\-x'). \]
As a result the joint distribution of $(\-Y,\,f({\-x}))$ is, 
\begin{equation}  
\left[ \begin{array}{c}
         \-Y \\
        f({\-x}) \end{array} \right] \sim \@N\left(\begin{array}{c}
         \mu(\-X) \\
        \mu({\-x}) \end{array},
				\left[
				\begin{array}{ll}
         K(\-X,\-X)+\nu I &K(\-X,{\-x}) \\
        K({\-x},\-X) &K({\-x},{\-x}) \end{array}\right]
				\right) ,  \label{e:jointdis1}
\end{equation}
where  $\-Y = \left[ {y}^*_1, \ldots, {y}^*_m\right]$, $\-X = \left[\-x^*_1, \ldots, \-x^*_m\right]$,
$I$ is the identity matrix, 
and the notation $K(\-A,\-B)$ denotes the matrix of the covariance evaluated at all pairs of points in set $\-A$ and in set $\-B$.
We note here that Eq.~\eqref{e:jointdis1} is the usual form of the GP model with noisy observations~\cite{williams2006gaussian},
where the noise-free prediction $f({\-x})$ is sought. 
However, for our purposes, we are actually interested in the noisy prediction to account for the intrinsic uncertainty in the underlying model itself (i.e.,
even if $f(x)$ is exactly know, $y$ is subject to uncertainty due to the presence of noise),
and the matter will be discussed further in Section~\ref{sec:igpr}. Here we can derive the joint distribution of $(\-Y, y)$:
\begin{equation}  
\left[ \begin{array}{c}
         \-Y \\
       y \end{array} \right] \sim \@N\left(\begin{array}{c}
         \mu(\-X) \\
        \mu({\-x}) \end{array},
				\left[
				\begin{array}{ll}
         K(\-X,\-X)+\nu I &K(\-X,{\-x}) \\
        K({\-x},\-X) &K({\-x},{\-x}) +\nu\end{array}\right]
				\right) .  \label{e:jointdis2}
\end{equation}

It follows immediately that the conditional distribution $\pi(y|\-x,D)$ is also Gaussian: 
\begin{subequations}
\label{e:gp}
\begin{equation}
  y ~|~\-x,~D \sim\mathcal{N}(\mu_\gp(\-x), \sigma_{\mathrm{GP}}^2(\-x)), \label{e:post}
\end{equation}
and the posterior mean $\mu_\gp$ and variance $\sigma_\gp^2$ are available in explicit forms:
\begin{align}
\label{eq:mu}
&\mu_\gp(\-x)=\mu(\-x)+k(\-x,\-X)(k(\-X,\-X)+\nu I)^{-1}(\-Y-\mu(\-x)\-1_{m}),\\
\label{eq:var}
&\sigma_\gp^2(\-x)= k(\-x,\-x)-k(\-x,\-X)(k(\-X,\-X)+\nu I)^{-1}k(\-X,\-x)+\nu,
\end{align}
\end{subequations} 
where $\-1_{m}$ is a $m$-dimensional vector of ones.
In what follows we use the notation $\pi_\gp(y|\-x,D)$ to denote the 
Gaussian distribution given by the GP model in Eq.~\eqref{e:gp}. 
It is important to note here that the noise covariance $\nu$ is usually not known in advance,
and in practice, it along with other hyper-parameters is determined 
by the maximum likelihood estimation (MLE).
 We refer to \cite{williams2006gaussian} for more details.  

\section{The IGPR method}\label{sec:igpr}
We now discuss how to compute the approximate posterior distribution using the GP model,
and in particular we want to reinstate here that our goal is to compute the posterior marginal distribution of each parameter. 
Specifically, starting with a basic version of IGPR model in Section~\ref{sec:igpr1}, and making use of a formulation based on proposal prior
as is shown in Section~\ref{sec:igpr2}, 
we develop an adaptive IGPR algorithm in Section~\ref{sec:igpr3}.

\subsection{A basic IGPR model}\label{sec:igpr1}
In this section we present the basic version of the IGPR method. 
 We want to compute the marginal posterior distributions $p(\theta^j|\-d)$ where $\theta^j$ 
is the $j$-th component of $\btheta$ for any $0<j\leq n_\theta$. 
Now suppose that we have a set of sample points generated from the likelihood function 
$\{(\hat{\btheta}_i,\hat{\-d_i})\}_{i=1}^m$.
Define the training set 
as $D^j= \{...,(\hat{\-d}_i,\hat{\theta}_i^j),...\}$, and by assuming $\theta^j(\-d)$ is Gaussian random field, 
 we can construct a GP model $\pi_{\gp}(\theta^j|\-d,D^j)$ from  $D^j$.
Plugging the observed data $\tilde{\-d}$ into the GP model yields the GP based posterior distribution 
$\pi_{\gp}(\theta^j|\tilde{\-d},D^j)$, which 
 can be used as an approximation of the true posterior $\pi(\theta^j|\tilde{\-d})$. 
 We refer to the GP-based approximate posterior as the IGPR model.
 In particular we note that the variance in the IGPR model consists of two parts: one corresponds to the actually variance of the posterior, and 
 the other corresponds to the model uncertainty of GP due to the limited number of samples.
  To obtain the IGPR model, we need to construct a training set $D^j$, and a natural way of doing so 
is to draw the training sample points from $\pi(\btheta,\-d)$: namely we
first draw samples of the parameter of $\btheta$ from the prior distribution $\pi(\btheta)$, obtaining $\hat{\btheta}_1,...,\hat{\btheta}_m$,  
 and for each sample $\hat{\btheta}_i$, we then generate a $\hat{\-d}_i$ from the likelihood function $\pi(\cdot|\hat{\btheta}_i)$.
 Next following the idea of ABC, we introduce a cut-off distance $\epsilon$, 
 and only include the samples whose distance to the observed data point $\tilde{\-d}$ is smaller than $\epsilon$ in the training set, yielding,
 $$D_\epsilon^j= \{(\hat{\bm{d}}_i,\hat{\theta}_i^j)|\rho(\hat{\bm d}_i,\tilde{\bm d})<\epsilon\}_{i=1}^m.$$
 This step can also be understood as employing local GP construction~~\cite{gramacy2015local,wu2016surrogate} and the motivation is two-fold.
 First, it eliminates the influence of samples that are far away from the actual data point $\tilde{\-d}$, which can potentially improve the accuracy of the GP prediction at $\tilde{\-d}$.
Second, for problems where the data set is massively large, there may be some computational issues with the GP model~(e.g., the inversion of the large covariance matrix $(k(\-X,\-X)+\sigma_n^2I)$ may not be numerically stable) and the use of local GP alleviates these problems. 
  Using the resulting training set $D_\epsilon^j$ and the GP model described in Section~\ref{sec:GPR} , we are able to compute a Gaussian approximation of each marginal distribution,  $\pi_{GP}(\theta^j|\tilde{\bm{d}},D_\epsilon^j)$ and we define
  \begin{equation}
  \psi_{\gp}(\btheta) =\prod_{j=1}^{n_\theta}\pi_\gp(\theta^j|\tilde{\-d},D_\epsilon^j),
  \end{equation}
  which can be regarded as an approximation of the joint posterior. 
 Alg.~\ref{alg:igpr1} provides the pseudo code of this procedure. 
  \begin{algorithm}
 \caption{The basic IGPR algorithm}\label{alg:igpr1}
 \begin{itemize}
 \item draw $\hat{\btheta}_1,...,\hat{\btheta}_m$ from $\pi(\btheta)$;
 \item for $i=1...m$, draw $\hat{\bm{d}}_i\sim \pi(\bm{d}|\hat{\btheta}_i)$;
 \item for $j=1$ to $n_\theta$ do
 \begin{itemize}
 \item let $D_\epsilon^j= \{(\hat{\bm{d}}_i,\hat{\theta}_i^j)|\rho(\hat{\bm d}_i,\tilde{\bm d})<\epsilon\}_{i=1}^m$;
 \item construct the GP model $\pi_\gp(\theta^j|\bm{d},D_\epsilon^j)$ from data set $D_\epsilon^j$;
 \end{itemize}
 \item let   $\psi(\btheta) =\prod_{j=1}^{n_\theta} \pi_\gp(\theta^j|\tilde{\bm{d}},D_\epsilon^j)$.
 \end{itemize}
\end{algorithm}

The algorithm presented above is  an analogy of ABC-REJ in the ABC family, 
in that both of them directly draw samples from the prior distribution and then select samples based on their distance to the observed data point. 
Compared to the two more complicated versions of IGPR to be discussed later,  an advantage of this version of IGPR model is that it can be implemented 
in an on-line/off-line fashion: in the off-line stage a very large number of samples can be drawn from the model and form  the data set 
for constructing the GP model;
in the on-line stage, once an observed data becomes available, it can be directly fed into the GP model and yield the approximate posterior. 
To this end, though this version of IGPR is relatively simple, it can be useful to applications 
 where the online/offline scheme is suitable.

\subsection{IGPR with a proposal prior}\label{sec:igpr2}
In Alg.~\ref{alg:igpr1}, the samples of $\btheta$ are drawn from the prior distribution $\pi(\btheta)$. 
Since in many practical problems the procedure of generating $\hat{\-d}_i$ from the likelihood function $\pi(\cdot|\hat{\btheta}_i)$ can be highly expensive, 
simply sampling according to the prior distribution
may not be a good choice for the computational efficiency, especially when the posterior differs significantly from the prior. 
Intuitively a more efficient way to determine the training samples is
to draw samples from a distribution that is close to the posterior, which is discussed in  this section.  

Suppose that we are given a distribution that can approximate the posterior, and following ~\cite{papamakarios2016fast} it is referred to as the proposal prior $\pi_q(\btheta)$. 
In this section we shall discuss how to construct the IGPR model using such a proposal prior. 
First it is easy to see that the posterior distribution can be written as, 
\begin{equation}
\pi(\btheta|\-d) \propto {\pi(\-d|\btheta)\pi(\btheta)} = {\pi(\-d|\btheta)}\pi_q(\btheta)\frac{\pi(\btheta)}{\pi_q(\btheta)}
\propto \pi_q(\btheta|\-d)\frac{\pi(\btheta)}{\pi_q(\btheta)} \label{e:bayes_prop}
\end{equation}
where 
\begin{equation}\pi_q(\btheta|\-d)=\frac{\pi(\-d|\btheta)\pi_q(\btheta)}{\pi_q(\-d)},\quad
\pi_q(\-d)=\int {\pi(\-d|\btheta)}\pi_q(\btheta) d\btheta. \label{e:post_q}
\end{equation}

Next we can apply  Algorithm~\ref{alg:igpr1} to $\pi_q(\btheta|\tilde{\-d})$ in Eq.~\eqref{e:post_q} (except that the samples 
are now drawn from $\pi_q$ instead of $\pi$), and obtain the Gaussian approximation $\psi(\btheta)$.
It should be clear that the Gaussian distribution $\psi(\btheta)$ computed is an approximation of $\pi_q(\btheta|\-d)$
 rather than  the actual posterior distribution as it uses 
$\pi_q$ as the prior.   Now replacing $\pi_q(\btheta|\-d)$ in Eq.~\eqref{e:bayes_prop} yields an approximation to the actual posterior $\pi(\btheta|\tilde{\-d})$,
denoted by, 
\begin{equation}
\pi_{post}(\btheta)\propto\psi(\btheta)\frac{\pi(\btheta)}{\pi_q(\btheta)}.
\end{equation}
Next we construct an independent Gaussian approximation of the prior,  
\begin{equation}
 \phi_{0}(\btheta)= \prod_{j=1}^{n_\theta} \mathcal{N}(\mu^j_0,{(\sigma^j_0)}^2),
 \end{equation}
where $(\mu^j_0,{(\sigma^j_0})^2)$ are respectively the prior mean and variance of $\theta^j$ for $j=1...n_\theta$. 
It should be noted there that we can use such an independent Gaussian approximation of the prior because we only intend to 
compute the marginal posterior distributions. 
If we take  $\pi_q(\btheta)$ to be an independent Gaussian, 
\begin{align*}
    \pi_q(\btheta)&\sim \prod_{j=1}^{n_\theta} \mathcal{N}(\mu^j_q, {(\sigma^j_q)}^2),
\end{align*}
we can derive that $\phi(\btheta)\propto\psi(\btheta)\phi_0(\btheta)$ is also in the form of:
\begin{subequations}\label{e:phi}
\begin{equation}
\phi(\btheta) = \prod_{j=1}^{n_\theta} \mathcal{N}(\mu^j_{\phi}, ({\sigma^j_{\phi}})^2),
\end{equation}
 with mean
\begin{equation}
\mu_{\phi}^j = \frac{\frac{\mu_{\gp}}{(\sigma^j_{\gp})^2} - \frac{\mu_q}{{(\sigma_q^j)}^2} + \frac{\mu_0}{{(\sigma_0^j)}^2}}{\frac{1}{(\sigma^j_{\gp})^2} - \frac{1}{{(\sigma_q^j)}^2} + \frac{1}{{(\sigma_{0}^j)}^2}},
\end{equation}
and variance
\begin{equation}
{\sigma^j}^2_{\phi} = \frac{1}{\frac{1}{(\sigma^j_{\gp})^2} - \frac{1}{({\sigma_q^j})^2} + \frac{1}{({\sigma_{0}^j})^2}}.
\end{equation}
\end{subequations}
It is easy to see that, for $\phi$ to be a well-defined Gaussian distribution, we must have
 \begin{equation}
\frac{1}{(\sigma^j_{\gp})^2} - \frac{1}{({\sigma_q^j})^2} + \frac{1}{({\sigma_{0}^j})^2}>0,\label{e:ineq_var}
 \end{equation}
and  measures can be taken to ensure this in the numerical implementation:
 e.g., one can choose $\pi_q$ such that  $\sigma_q^j\leq {\sigma_0^j}$ for all $j=1...n_\theta$. 
 Next it follows that the approximate posterior $\pi_{post}$ can be written as
  $$\pi_{post}\propto\phi(\btheta)\frac{\pi(\btheta)}{\phi_0(\btheta)}.$$
 We note here that in Algorithm~\ref{alg:igpr2}, the third step inside the ``for''-loop is employed  to make sure that the posterior variance is smaller than
 the prior variance which in turn ensures that Eq.~\eqref{e:ineq_var} holds. 
 
   \begin{algorithm}
 \caption{The proposal prior based IGPR algorithm}\label{alg:igpr2}
 \begin{itemize}
 \item draw $\hat{\btheta}_1,...,\hat{\btheta}_m$ from $\pi_q(\btheta)$;
 \item for each $\hat{\btheta}_i$ for $i=1...m$, draw $\hat{\bm{d}}_i\sim \pi(\bm{d}|\hat{\btheta}_i)$;
 \item for $j=1$ to $n_\theta$ do
 \begin{itemize}
 \item let $D^j= \{(\hat{\bm{d}}_i,\hat{\theta}_i^j)|\rho(\hat{\bm d}_i,\tilde{\bm d})<\epsilon\}_{i=1}^m$;
 \item construct the GP model $\pi_\gp(\theta^j|\bm{d},D^j)$ from data set $D^j$;
 \item if $\sigma^j_{\gp}>\sigma_0^j$, let $\sigma^j_{\gp}=\sigma^j_0$;
 \end{itemize}
 \item let $\psi(\btheta) =\prod_{j=1}^{n_\theta} \pi_\gp(\theta^j|\tilde{\bm d},D^j)$;
 \item computer $\phi(\btheta)$ using Eqs.~\eqref{e:phi};
 \item let $\pi_{post}(\btheta)\propto\phi(\btheta)\frac{\pi(\btheta)}{\phi_0(\btheta)}.$
 \end{itemize}
\end{algorithm}
In the proposal prior based method, the key is to obtain a good proposal $\pi_q$. As has been discussed, the samples used to construct the GP approximation 
are drawn from the proposal prior $\pi_q$, and intuitively, for these samples to be informative, we need them to cover the high probability region of the posterior.
As such a natural choice is to select $\pi_q$ to be close to the posterior distribution. 
Certainly this cannot be done in one step, as the posterior is not known in advance, and so in next section we introduce a scheme to adaptively identify the proposal prior and construct the IGPR model for a given posterior distribution.

\subsection{An adaptive IGPR scheme} \label{sec:igpr3}
Here we discuss how to construct the proposal prior $\pi_q(\btheta)$.
As is discussed in the previous section, computing a good proposal prior $\pi_q$ is rather challenging especially when the posterior is concentrated 
in a rather small region of the entire prior state space. 
One remedy to the issue is to use a simulated tempering approach~\cite{brooks2011handbook}. Namely we construct a sequence of ``intermediate posterior distributions" with 
the first one  being the prior and the last one being the actual posterior: $\{\pi_t(\btheta|\tilde{\-d})\}_{t=0}^T$ with $\pi_0(\btheta|\tilde{\-d})=\pi(\btheta)$ and $ \pi_T(\btheta|\tilde{\-d})=\pi(\btheta|\tilde{\-d})$.
Typically, the sequence of ``intermediate posteriors'' should be constructed in a way that their variances are decreasing. 
One can see that the standard tempering/annealing approaches do not apply here as we do not have an explicit expression of the posterior distribution.
Therefore, we use a special tempering strategy by creating artificial data with larger observation noise. 
Namely we introduce a sequence of artificial  data noise:
$\eta_t\sim\@N(0,\sigma^2_tI).$  
Here the standard deviation $\sigma_t$ plays the role of the tempering parameter, and it should be chosen to be gradually decreasing with respect to $t$ 
and $\sigma_T=0$.
We then define for any $1\leq t\leq T$, 
$$\tilde{\-d}'_t=\tilde{\-d}+\bm\eta_t,\quad \mathrm{and}\quad \pi_t(\btheta|\tilde{\-d}) =\pi(\btheta|\tilde{\-d}'_t), $$ and 
one can see that $\pi_T(\btheta|\tilde{\-d}) =\pi(\btheta|\tilde{\-d}'_T)=\pi(\btheta|\tilde{\-d})$.
Other tempering approaches can also be used to design the posterior sequence as long as they satisfy the aforementioned conditions. 

Once a tempering scheme is chosen, the adaptive scheme proceeds as follows. 
For any $1\leq t\leq T$, let $\phi_{t-1}(\btheta)$ be the independent Gaussian approximation of the posterior of the previous step $\pi_{t-1}(\btheta|\tilde{\-d})$, 
and we then use $\phi_{t-1}(\btheta)$ as the proposal prior for the present step to compute $\phi_t(\btheta)$, the independent Gaussian approximation of 
$\pi_{t}(\btheta|\tilde{\-d})$.  As has been explained above $\pi_{t}(\btheta|\tilde{\-d})$ is 
essentially $\pi(\btheta|\tilde{\-d}'_t)$, and thus they key step here is to construct IGPR model for $\pi(\btheta|\tilde{\-d}'_t)$. 
Specifically, we first generate the simulated data points $\{\hat{\-d}_i,\btheta_i)\}_{i=1}^m$ from $\phi_{t-1}(\btheta)\pi(\-d|\btheta)$;
for each $i=1...m$, we let $$\hat{\-d}'_i = \hat{\-d}_i+\bm{\eta}_t,\quad\mathrm{with} \quad \bm\eta_t\sim \@N(0,\sigma_t^2I);$$
the last step is to construct the IGPR model using the data set $\{\hat{\-d}'_i,\btheta_i)\}_{i=1}^m$ as described earlier. 
The procedure is repeated until it reaches the last step $t=T$. 
It is important to note here that, in Algorithms~\ref{alg:igpr1} and \ref{alg:igpr2}, it is assumed that the cut-off distance $\epsilon$ is prescribed, and here we allow that $\epsilon_t$ in each step is determined by choosing a certain portion $\omega$ of the samples in terms of their distances 
to the data point $\tilde{\-d}$. By determining the cut-off distance this way  we can ensure that sufficient samples are used to construct the GP model. 
We conclude this section by presenting the complete procudure of the adaptive IGPR  in Alg.~\ref{alg:igpr}.

\begin{algorithm}
    \caption{The adaptive IGPR algorithm}
    \label{alg:igpr}
    \begin{algorithmic}[1]
    
    \STATE{\textbf{Algorithm Parameters}:  $m$, $T$, $\omega$,  $\{\sigma_t\}_{t=1}^T$;}
     \STATE{\textbf{Output}: ${\pi}_{post}(\btheta|\bm{\tilde{d}})$;}
       \STATE let 
$ \phi_{0}(\btheta)= \prod_{j=1}^{n_\theta} \mathcal{N}(\mu^j_0,(\sigma^j_0)^2)$,
where $(\mu^j_0,{(\sigma^j_0})^2)$ are respectively the prior mean and variance of $x^j$;

  \FOR {$t = 1$ to $T$}
         \STATE draw $m$  samples $\{\hat{\bm{\theta}}_1,...,
				\hat{\bm{\theta}}_{m}\}$ from the proposal prior $ \phi_{t-1}(\btheta)$; 
\FOR {$i=1$ to $m$}				
       \STATE draw $\hat{\bm{d}}_i \sim \pi(\cdot|\hat{\bm{\theta}}_i)$;
  \STATE let
          $\hat{\bm{d}}'_i = \hat{\bm{d}}_i+\bm{\eta}_t$ with $\bm\eta_t\sim \@N(0,\sigma_t^2I)$;
          \STATE calculate $\delta_i=\|\hat{\bm{d}}'_i -\tilde{\bm d}\|_2$;
   \ENDFOR
   \STATE let $\epsilon_t$ be the $\omega$-th quantile of $\{\delta_i\}_{i=1}^m$;
	\FOR {$j=1$ $n_\theta$}
	\STATE let $D^j= \{(\hat{\bm{d}}'_i,\hat{\theta}_i^j)|\delta_i<\epsilon_t\}_{i=1}^m$;
 \STATE construct the GP model $\pi_\gp(\theta^j|\bm{d},D^j)$ from data set $D^j$;
\IF{$\sigma^j_{\gp}>\sigma_0^j$} \STATE let $\sigma^j_{\gp}=\sigma^j_0$;\ENDIF

	\ENDFOR
	\STATE   let $\psi_t(\btheta) =\prod_{j=1}^{n_\theta} \pi_\gp(\theta^j|\tilde{\bm{d}},D^j)$;

\STATE compute $\phi_t(\btheta)\propto \psi_t(\btheta)\phi_0(\btheta)$ using Eqs.~\eqref{e:phi};

	\ENDFOR
\STATE	 let $\pi_{post}(\btheta)\propto\phi_T(\btheta)\frac{\pi(\btheta)}{\phi_0(\btheta)}.$
   \end{algorithmic}
\end{algorithm}

\subsection{Asymptotic analysis of IGPR}
In this section, we provide some analysis of the IGPR model with a proposal prior and 
in particular we study its asymptotic property with respect to both the cut-off threshold  $\epsilon$ and the sample size $m$. 
Given a likelihood function $\pi(\-d|\theta)$ and a proposal prior $\pi_q(x)$, we can define a joint 
 distribution $$\pi_\epsilon(\theta,\-d) = \pi(\-d|\theta)\pi_q(\theta) \delta_\epsilon(\rho(\-d,\tilde{\-d})),$$
 and as mentioned earlier $\rho(\cdot,\cdot)$ is taken to be the Euclidean distance. 
 Note here that we drop the superscript $j$ in $\theta^j$ without causing any ambiguity.   
Suppose that a training set $D_\epsilon(\tilde{\-d})=\{(\theta_i,\-d_i)\}_{i=1}^m$ is generated according to distribution $\pi_\epsilon(\theta,\-d)$,
and it should be clear that for any two points $\-d_i$ and $\-d_{i'}$ in $D_\epsilon$, we have $\rho(\-d_i,\tilde{\-d})\leq\epsilon$ and 
$\rho(\-d_i,\-d_{i'})\leq2\epsilon$. Next we study the  property of the GP model constructed with $D_{\epsilon}$. 
First we introduce an assumption on the kernel function $k(\-d,\-d')$ of the GP model: 
\begin{assumption}\label{ass:kf}
(a): for any $\-d\in R^{n_d}$, $k(\-d,\-d)=c$ where $c$ is an arbitrary positive constant;
(b): for any $\-d\in R^{n_d}$, and any $\upsilon>0$, there exists $\epsilon>0$, such that, for $\forall$ $\-d'\in R^{n_d}$ satisfying $\rho(\-d,\-d')<\epsilon$,
we have  $|k(\-d,\-d')-k(\-d,\-d)|<\upsilon$.
\end{assumption}
Note that most popular kernel functions such as the squared exponential, exponential,  and the rational quadratic kernels all satisfy this assumption. 
We then have the following theorem: 
\begin{theorem}\label{th:dl}
 For any $\tilde{\-d}\in R^{n_d}$, let $\pi_{GP}(\theta|\tilde{\-d},D_\epsilon(\tilde{\-d}))$ be the GP model constructed with $D_\epsilon(\tilde{\-d})$ and evaluated at $\tilde{\-d}$,
and let $\mu^{m}_\epsilon(\tilde{\-d})$ and $\sigma_\epsilon^{m}(\tilde{\-d})$ be respectively the mean and standard deviation of 
$\pi_{GP}(\theta|\tilde{\-d},D_\epsilon(\tilde{\-d}))$.
Suppose that the kernel function of the GP model  $k(\cdot,\cdot)$ satisfies Assumption \ref{ass:kf}, we have 
\begin{equation}\label{eq:th2}
\mu^{m}_{\epsilon}(\tilde{\-d}) \xrightarrow[\epsilon\rightarrow0]{d}\mu_0^m\quad\mathrm{and}\quad \mu^m_0(\tilde{\-d})\xrightarrow[m\rightarrow\infty]{a.s. }\mathbb{E}_{\btheta|\tilde{\-d}}[\btheta].
\end{equation}

\end{theorem}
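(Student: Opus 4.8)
The plan is to work directly from the closed-form GP posterior mean in Eq.~\eqref{eq:mu}, treat the two limits separately, and reduce the whole statement to (i) the behaviour of the kernel matrices as the data cluster around $\tilde{\-d}$ and (ii) a strong law of large numbers for the empirical mean of the responses $\theta_i$. Throughout I evaluate the GP at the fixed point $\-x=\tilde{\-d}$ and write $\-X=[\-d_1,\dots,\-d_m]$, $\-Y=[\theta_1,\dots,\theta_m]$ for the training set $D_\epsilon(\tilde{\-d})$ sampled from $\pi_\epsilon$.

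First I would handle the $\epsilon\to0$ limit with $m$ held fixed, starting from
\[
\mu_\epsilon^m(\tilde{\-d}) = \mu(\tilde{\-d}) + k(\tilde{\-d},\-X)\bigl(k(\-X,\-X)+\nu I\bigr)^{-1}\bigl(\-Y-\mu(\tilde{\-d})\-1_m\bigr).
\]
The key observation is that every sampled point obeys $\rho(\-d_i,\tilde{\-d})\le\epsilon$, hence $\rho(\-d_i,\-d_{i'})\le2\epsilon$; combined with Assumption~\ref{ass:kf}(b) this forces every entry of $k(\tilde{\-d},\-X)$ and every off-diagonal entry of $k(\-X,\-X)$ to tend to $c$ as $\epsilon\to0$, while Assumption~\ref{ass:kf}(a) pins the diagonal at exactly $c$. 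Thus $k(\tilde{\-d},\-X)\to c\,\-1_m^{\top}$ and $k(\-X,\-X)\to c\,\-1_m\-1_m^{\top}$, and since $\nu>0$ the limit matrix $c\,\-1_m\-1_m^{\top}+\nu I$ is invertible with a continuous inverse. A Sherman--Morrison computation then gives $c\,\-1_m^{\top}(c\,\-1_m\-1_m^{\top}+\nu I)^{-1}=\tfrac{c}{\nu+cm}\,\-1_m^{\top}$, so the kernel-dependent coefficient vector converges to a deterministic constant.

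Next I would track the distribution of the responses: as $\epsilon\to0$ the conditional law of each $\theta_i$ given $\-d_i$ in the shrinking ball $B_\epsilon(\tilde{\-d})$ converges to $\pi_q(\theta\mid\tilde{\-d})$, so $\-Y$ converges weakly to a vector of $m$ i.i.d.\ draws from $\pi_q(\theta\mid\tilde{\-d})$. Pairing the (deterministic) convergence of the coefficient vector with the weak convergence of $\-Y$ through Slutsky's theorem yields
\[
\mu_\epsilon^m(\tilde{\-d})\xrightarrow[\epsilon\to0]{d}\ \mu_0^m:=\frac{\nu}{\nu+cm}\,\mu(\tilde{\-d})+\frac{cm}{\nu+cm}\,\bar\theta_m,
\]
where $\bar\theta_m=\tfrac1m\sum_{i=1}^m\theta_i$ and the $\theta_i$ are i.i.d.\ from $\pi_q(\theta\mid\tilde{\-d})$. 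This convex combination of the prior-mean value $\mu(\tilde{\-d})$ and the empirical mean of the responses establishes the first limit. For the second limit I would simply note that $\tfrac{\nu}{\nu+cm}\to0$ and $\tfrac{cm}{\nu+cm}\to1$, while the strong law of large numbers gives $\bar\theta_m\xrightarrow[m\to\infty]{a.s.}\E_{\theta\mid\tilde{\-d}}[\theta]$ (the mean of $\pi_q(\theta\mid\tilde{\-d})$, assumed finite), whence $\mu_0^m\xrightarrow[m\to\infty]{a.s.}\E_{\theta\mid\tilde{\-d}}[\theta]$.

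The main obstacle is the first limit, not the second. Assumption~\ref{ass:kf}(b) as stated only supplies a modulus of continuity anchored at a fixed point, yet here the points $\-d_i$ are themselves random and moving, so I must upgrade the entrywise statement to \emph{uniform} convergence of the off-diagonal kernel entries over the ball; the cleanest route is to invoke joint (uniform) continuity of $k$ on the compact set $\bar B_{\epsilon_0}(\tilde{\-d})\times\bar B_{\epsilon_0}(\tilde{\-d})$, which holds for all the listed kernels. A second delicate point is making the convergence-in-distribution statement precise, since $\mu_\epsilon^m$ and its limit $\mu_0^m$ live under different sampling laws indexed by $\epsilon$; a careful Slutsky/continuous-mapping argument — a random coefficient vector converging to a constant multiplied by a weakly convergent response vector — is exactly what reconciles the deterministic kernel limit with the weak response limit.
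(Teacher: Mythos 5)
Your proof is correct and takes essentially the same route as the paper's, which splits the argument into two lemmas---kernel continuity plus continuity of matrix inversion gives $\mu^m_\epsilon(\tilde{\mathbf{d}})\xrightarrow{d}\mu^m_0(\tilde{\mathbf{d}})$, and a Sherman--Morrison computation at $\epsilon=0$ reduces $\mu^m_0$ to exactly your convex combination $\frac{cm}{\nu+cm}\bar\theta_m+\frac{\nu}{\nu+cm}\mu(\tilde{\mathbf{d}})$, after which the strong law of large numbers finishes---so you have merely fused the two lemmas by carrying out the Sherman--Morrison step already inside the $\epsilon\to0$ limit. The two refinements you flag (upgrading Assumption~\ref{ass:kf}(b), whose modulus is anchored at a fixed point, to uniform continuity of $k$ on a compact neighbourhood of $\tilde{\mathbf{d}}$, and the explicit Slutsky argument reconciling the almost-sure kernel limit with the weak limit of the responses) are steps the paper's proof passes over silently, so they tighten rather than depart from its argument.
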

\begin{proof} See Appendix~\ref{sec:proof}.
\end{proof}
Loosely speaking, Theorem~\ref{th:dl} states that the mean obtained by the IGPR model converges to the actual posterior mean
as $\epsilon$ approaches to zero and the number of samples increases. 

\section{Numerical examples}
\label{sec:examples}

\subsection{Overview of the examples}\label{sec:overview}
In this section we provide four numerical examples to demonstrate the performance of the proposed IGPR method. 
In the first example, we test the IGRP method on a one-dimensional toy problem, the main purpose of which is to demonstrate some properties of the method
as well as to compare the basic and the adaptive versions of it.
The other three examples concern real-world applications and they are all described by dynamical systems coupling with model errors, which renders 
the likelihood function intractable. 
In these three examples, we compare the performance of IGPR with a representative approximate inference method, the Sequential Neural Likelihood (SNL)~\cite{papamakarios2019sequential}.
SNL is a sampling method which trains an autoregressive flow on simulated data to approximate the likelihood function, and we choose to compare with SNL because a thorough  comparison of SNL with many other popular methods has been conducted in \cite{papamakarios2019sequential}. 

An important issue here is that, in all the three real-world examples the raw data that are recorded are actually time-series signal, which means the dimensionality of the data is extremely high and thus is beyond the capacity of any method for approximate inference. 
A typically remedy for this matter is to construct some summary  statistics of the high dimensional time-series data, and conduct inference using these statistics. 
In these examples we construct the summary statistics following the method in \cite{wang2018adaptive}. 
Specifically the absolute value of the signal amplitude, velocity, acceleration, and the power spectral density (PSD) are extracted from the signal. The mean, variance, skewness, and kurtosis of each physical quantity are computed as the features, resulting in 
a 16-dimensional data  $\bm{d}$  used to infer the model parameters $\btheta$. 
Moreover, for the purpose of performance evaluation, in all the examples the ground truth of these parameters are pre-determined and the data are simulated 
from the underlying model accordingly. Finally, since the wall-clock computational cost is reported in some examples, it is useful to mention that all the experiments are conducted on a desktop computer with a 3.6GHz CPU and 16Gb memory. 

\subsection{A one-dimensional toy problem}

\begin{figure}
  \centering
	\includegraphics[width=1.0\linewidth]{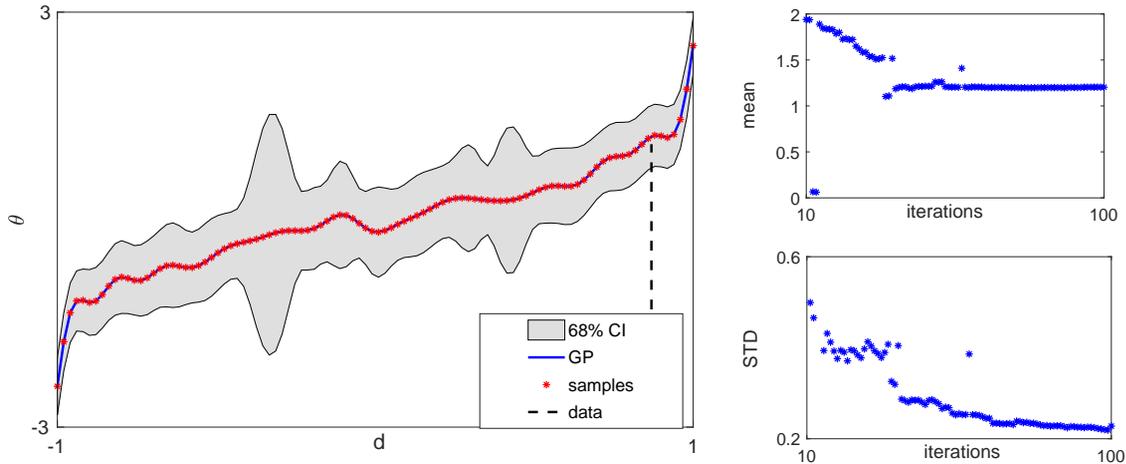}
	\caption{For the 1-D problem. Left: the GP model constructed by Alg.~\ref{alg:igpr1},
	where the solid line is the posterior mean predicted by GP, the shaded area is 
	the $68\%$ confidence interval (CI), and the asterisks are the samples generated from the target likelihood function; right: the top figure is the predicted posterior mean
	and the bottom figure is the predicted posterior standard deviation (STD), both plotted against the number of iterations.} \label{f:eg11}
\end{figure}

\begin{figure}
  \centering
	\includegraphics[width=1.0\linewidth]{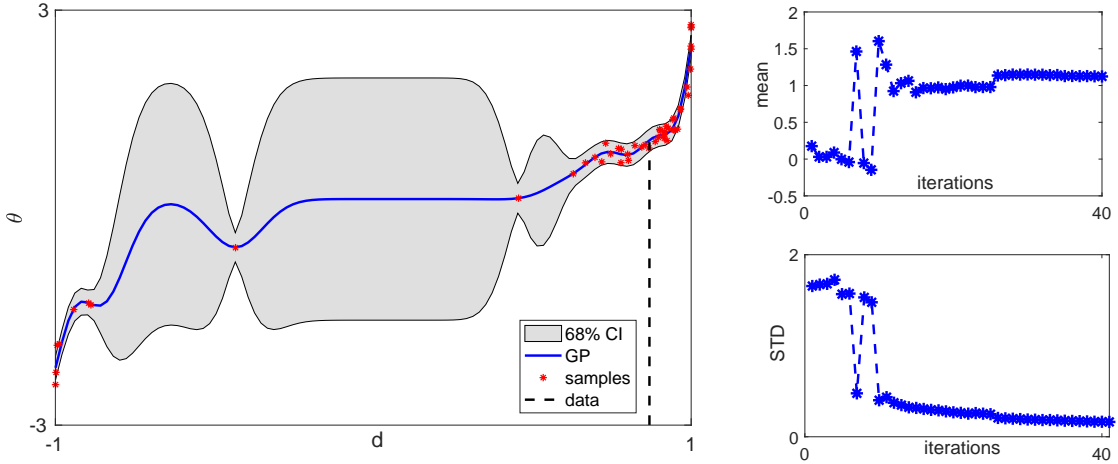}
	\caption{For the 1-D problem. Left: the GP model constructed by Alg.~\ref{alg:igpr},
	where the solid line is the posterior mean predicted by GP, the shaded area is 
	the $68\%$ confidence interval (CI), and the asterisks are the samples generated from the target likelihood function; right: the top figure is the predicted posterior mean
	and the bottom figure is the predicted posterior standard deviation (STD), both plotted against the number of iterations.} \label{f:eg12}
\end{figure}

Our first example is a one-dimensional (1-D) toy problem where the simulation model is
\begin{equation}
d = \mathrm{erf}(\theta+\eta),
\end{equation}
with $\eta\sim\@N(0,0.1^2)$, and $\mathrm{erf}(\cdot)$ being the error function.
The prior distribution is uniform: $\mathrm{U}[-3,3]$, the truth is taken to be $\theta=1$ and the associated observation 
is $d=0.869$. To demonstrate the properties of the  IGPR method, we conduct two sets of simulations here: 
one with the basic version, i.e., Algorithm~\ref{alg:igpr1}, and one with the adaptive version, Algorithm~\ref{alg:igpr}. 

In the test with Algorithm~\ref{alg:igpr1}, the samples are generated directly from the prior distribution.
To evaluate the impact of sample size, we compute the approximate posterior distribution with various sample sizes from 1 to 100. 
In Fig.~\ref{f:eg11} (left), we plot the posterior mean and variance (corresponding to the $68\%$ confidence interval) predicted by the IGPR method
with 100 samples. 
The actual data point $\tilde{d}=0.869$ is indicated by the dashed vertical line in the figure. 
Also shown in the figure are the samples drawn from the likelihood function, which
are largely distributed uniformly in the state space $[-3,3]$ as they are drawn directly from the prior $U[-3,3]$.
 It can be seen that a large portion of the samples actually do not contribute to the GP model as they are rather far from $\tilde{d}$. 
As a result, the variance of the GP model is reduced rather evenly in the state space. 
On the left of Fig.~\ref{f:eg12} we plot the predicted posterior mean and variance 
as a function of the number of iterations, 
in which we observe clear convergence of the mean and variance as the number of iterations increases.

 Next we test the adaptive IGPR method. With this method, we construct the IGPR model with 5 initial points, 
and 40 iterations with one sample in each iteration, resulting in totally 45 samples from the likelihood function. 
Note that in this toy problem we use all the available samples to construct the GP,
which is  slightly different from Alg.~\ref{alg:igpr}.  
In Fig.~\ref{f:eg12} (left), we plot the posterior mean and variance (corresponding to the $68\%$ confidence interval) predicted by the IGPR method
as well as the data points. First we can see here that, unlike the samples in Fig.~\ref{f:eg11} that distributed uniformly, 
most of these samples are placed near the true data point, which demonstrate the 
effectiveness of the adaptive scheme to select samples points. 
As a result, the variance predicted 
by the IGPR model in the area near the data point is much smaller than that in areas that are far apart,
which is also different from Fig.~\ref{f:eg11} where the variance distribution is more even across the whole interval. 
The plots on the right show the mean and variance with respect to the sample size, which shows a much faster convergence 
than those in Fig.~\ref{f:eg11}.  
Finally we report that posterior mean predicted by the adaptive IGPR (Algorithm~\ref{alg:igpr}) is $1.12$ and the standard deviation is $0.16$.

\subsection{Metabolic pathway network}
In terms of real-world applications, we first consider a simply metabolic pathway network, given by a stochastic dynamical system~\cite{voit2004decoupling}
\begin{equation}
    \label{eq:example_S}
    \begin{split}
        \dot{X}_1 &= (\alpha X_2^{-0.4} - \beta_1X_1^{0.5})  \exp(\xi_t),\\
        \dot{X}_2 &= \beta_1X_1^{0.5} - \beta_2X_1^{-1}X_2^{0.4},
    \end{split}
\end{equation}
 where $\xi_t$ is a white Gaussian noise following $\xi_t\sim \mathcal{N}(0, 10^{-2})$, and the initial condition is $X_1(0)=1.2$ and $X_2(0)=1$. In this model, 
$X_1$ and $X_2$ are two dependent metabolites  
and  ($\alpha$, $\beta_1$, $\beta_2$) are three key parameters describing how $X_1$ and $X_2$ are coupled. 
These three parameters can not be measured directly and need to be inferred from the system output. 
In particular we make measurements of the signal $X(t) = X_1(t) + X_2(t)$, and 
as is mentioned earlier, 16 summary statistics  are extracted to infer the  model parameters.
Since this is a synthetic example, the ``true values'' of the parameters are known and shown in Table \ref{tab:S_system_prior}.  Moreover, we assume a uniform prior distribution on each of the parameters where the ranges of the priors are given  in  Table  \ref{tab:S_system_prior}.   
Due to the presence of the model noise $\xi_t$,  the analytical expression of the likelihood  function  in this  problem  is not available, which makes usual posterior estimation methods infeasible.

We implement IGPR and SNL to compute the posterior distributions of the parameters, and 
for the purpose of comparison, we use the same number iterations and the same number of samples per iteration in both methods.
Specifically we fix the number of iterations to be $T=10$, and use three different sample sizes per iteration: $m=50$, $m=100$ and $m=200$,
resulting in 500, $1000$ and $2000$ samples in total respectively. 
In IGPR, we use all the samples for $m=50$ and $100$ since the sample size in each iteration is rather small,  and $\rho=25\%$ for $m=200$.
The tempering scheme is chosen to be $$\sigma_t= 0.1(T-t)/T,$$
which is also used in the next two examples. 
Since the posterior mean is a commonly used estimator for the parameters, we evaluate the performance of the methods by calculating the estimation error between the posterior mean and 
the ground truth for each parameter. 
We repeat the simulation 50 times for either method and compute the average estimation errors 
where the results are summarized in Table~\ref{tab:errormeanAndStd_S}.
The standard deviations (STD) of the  estimation errors are also shown in the table. 
One can see from the table that, with  $500$ samples in total, the results of IGPR are much less accurate than those of SNL,
with 1000 samples it achieves better accuracy  than SNL on $\alpha$ and $\beta_2$,
and finally it produces substantially better results on all three parameters when the sample size becomes 2000.
We also observe from the table that, for sample size 2000, the STDs of the estimation errors are also much smaller
than those of SNL, suggesting that the results of IGPR are subject to smaller variations statistically.  

The performance can also be compared by visualizing the posterior results.
In Fig.~\ref{f:eg2} we show the marginal posterior distributions computed by IGPR and SNL with 2000 total samples in one trial,
where for SNL the distributions are obtained via kernel density estimation of the posterior samples. 
In the figure, we can see that IGPR produces evidently better results, confirming the findings in Table~\ref{tab:errormeanAndStd_S}.
Next in Fig.~\ref{f:eg2_abs}, to demonstrate how the estimation  results improve as the iteration proceeds in IGPR, 
 we plot the average estimation error in IGPR, against
the number of iterations. It can be seen from the figure that, the errors on all three parameters decay obviously as 
the number of iterations increases and it looks that the results become stable after 5 iterations. 

\begin{table}[htbp]
{\footnotesize
  \caption{ For the metabolic pathway example.  The ground truth and the prior distributions of the model parameters.}  \label{tab:S_system_prior}
\begin{center}
  \begin{tabular}{|c|c|c|c|} \hline
     parameter&$\log(\alpha)$& $\log(\beta_1)$& $\log(\beta_2)$\\ \hline
     truth&0 & 0 & 0 \\\hline
  prior& $\mathcal{N}(-0.2,0.2)$ & $\mathcal{N}(-0.2,0.2)$ & $\mathcal{N}(-0.2,0.2)$\\
 \hline
  \end{tabular}
\end{center}
}
\end{table}
\begin{table}[htbp]
{\footnotesize
  \caption{ For the metabolic pathway example. Means and standard deviations of the estimation errors. The smaller errors of the two methods are shown in bold, and
  the numbers inside the parentheses  are the standard deviations.}  \label{tab:errormeanAndStd_S}
\begin{center}
  \begin{tabular}{|c|c|c|c|c|} \hline
    Sample-size &  Method & $\log \alpha$& $\log \beta_1 $& $\log \beta_2$\\ \hline
 500   & IGPR & $0.56(1.796)$ & $0.94(2.665)$ & $2.57( 6.22)$\\
 ($m=50$)  & SNL & ${\bf0.099}(0.053)$ & $ {\bf0.070}(0.047)$ & ${\bf0.079}(0.072)$ \\ \hline
   
    1000   & IGPR & ${\bf0.030}(0.037)$ & ${\bf0.021}(0.030)$ & $0.067(0.117)$ \\
 ($m=100$)  & SNL& $0.079(0.072)$ & $0.080(0.056)$ & $0.067(0.051)$ \\ \hline
   
    2000   & IGPR & ${\bf0.006}(0.007)$ & ${\bf0.004}(0.010)$ & ${\bf0.009}(0.016)$ \\
      ($m=200$)        & SNL& $0.070(0.055)$ & $0.091(0.064)$ & $0.060(0.038)$  \\ \hline
   
  \end{tabular}
\end{center}
}
\end{table}

\begin{figure}
  \centering
	\includegraphics[width=.9\linewidth]{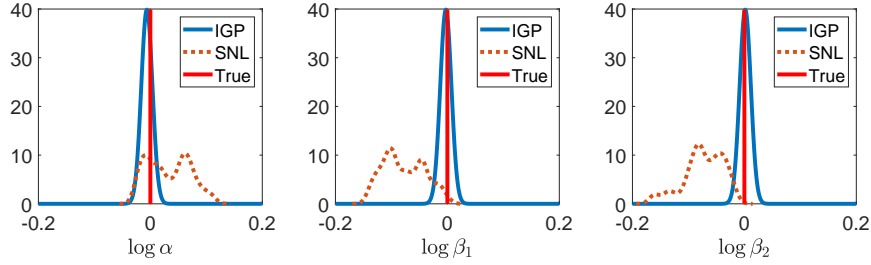}
	\caption{For the metabolic pathway example. The marginal posterior distribution computed by IGPR and SNL.
	In plots, the red solid lines indicate the true parameter values.} 
	\label{f:eg2}
\end{figure}

\begin{figure}
  \centering
	\includegraphics[width=.85\linewidth]{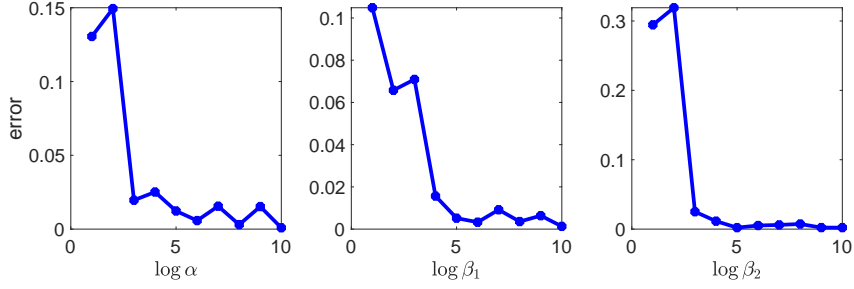}
	\caption{For the metabolic pathway example. The absolute error with respect to iterations computed by IGPR and SNL.} 
	\label{f:eg2_abs}
\end{figure}

\subsection{Blowfly population dynamics}
\label{sec:blowfly}
In this section we consider  a chaotic ecological system tested in \cite{meeds2014gps}. 
It is dynamical system that models the adult blowfly populations~\cite{wood2010statistical}, 
and interestingly the system can produce chaotic behavior for some parameter settings.
Specifically the discretized dynamical system  is the following, 
\begin{equation*}
N_{\tau+1} = PN_{\tau-\tau_\delta}\exp(-N_{t-\tau}/N_0)e_\tau + N_\tau\exp(\partial \epsilon_\tau),\label{e:blowfly}
\end{equation*}
where $N_\tau$ is the population, $e_\tau\sim N(1/\sigma^2_p,1/\sigma^2_p)$ and $\epsilon_\tau \sim N(1/\sigma^2_d,1/\sigma^2_d)$ are sources of noise, and $\tau_\delta$ is an integer. In total there are six model parameters $\tilde{\btheta} = (P,\partial, N_0, \sigma_d, \sigma_p, \tau_\delta )$
that we want to estimate. 
 Like \cite{meeds2014gps} we model the logarithmic value of  $\bm{\theta}$, where the prescribed ground truth and the prior distributions
are both shown in Table~\ref{tab:blowfly_prior}. Time-series signal $\{N_\tau\}$ is generated from model~\eqref{e:blowfly},
and the data for inferring the model parameters are again the 16 summary statistics 
of $N_\tau$ as described in Section~\ref{sec:overview}.

\begin{table}[htbp]
{\footnotesize
  \caption{ For the blowfly population example. The ground truth and the prior distributions of the model parameters.}  \label{tab:blowfly_prior}
\begin{center}
  \begin{tabular}{|c|c|c|c|c|c|c|} \hline
     parameter&$\log P$& $\log \partial $& $\log P_0$& $\log \sigma_d $& $\log \sigma_p $& $\log \tau_\delta$\\ \hline
     truth  & 4 & -1.4 & 6.5 & 0.25 & 0.5 & 2.8\\\hline
  prior&$ N(2, 2^2)$ & ${N}(-1.8, 0.4^2)$ & ${N}(6, 0.5^2)$ &${N}(-0.75, 1^2)$ & ${N}(-0.5, 1^2)$ & ${N}(2.7, 0.1^2)$\\
 \hline
  \end{tabular}
\end{center}
}
\end{table}

\begin{table}[htbp]
{\footnotesize
  \caption{ For the blowfly population example. Means and standard deviations of the estimation errors. The smaller errors of the two methods are shown in bold, and
  the numbers inside the parentheses  are the standard deviations.}  \label{tab:errormeanAndStd_blowfly}
\begin{center}
  \begin{tabular}{|c|c|c|c|c|c|c|c|} \hline
    Sample-size &  Method & $\log P$& $\log \partial $& $\log P_0$& $\log \sigma_d $& $\log \sigma_p $ & $\log \tau_\delta$\\ \hline
 1000   & IGPR & $0.43(0.140)$ & $0.14(0.156)$ & $0.27( 0.102)$ & $ {0.81}(0.671)$ & $0.50(0.192)$ &${\bf0.08}(0.023)$\\
($m=100$)   & SNL & ${\bf0.34}(0.225)$ & $ {\bf0.09}(0.077)$ & $0.27(0.210)$ & ${\bf0.24}(0.173)$ & ${\bf0.44}(0.220)$ & $0.11(0.042)$\\ \hline
   
    5000   & IGPR & ${\bf0.10}(0.066)$ & $0.14(0.021)$ & ${\bf0.09}(0.072)$ & ${\bf0.06}(0.037)$ & $0.21( 0.041)$ & ${\bf0.06}(0.018)$\\
  ($m= 500$)& SNL& $0.20(0.159)$ & $0.14(0.059)$ & $0.11(0.082)$ & $0.12(0.051)$ & ${\bf0.18}( 0.090)$ & $0.07(0.036)$\\ \hline
   
    10000   & IGPR & ${\bf0.05}(0.035)$ & ${\bf0.14}(0.020)$ & ${\bf0.03}(0.027)$ & ${\bf0.05}(0.018)$ & $0.21(0.027)$ & ${\bf0.05}(0.016)$\\
 ($m=1000$)  & SNL& $0.11(0.078)$ & $0.17(0.040)$ & $0.07(0.056)$ & $0.08(0.051)$ & ${\bf0.13}(0.069)$ &$0.06(0.027)$ \\ \hline
   
   $30634$   & SMC-ABC & $0.49$ & $0.37$ & $0.29$ & $0.83$ & $1.02$ & $0.10$\\ \hline
   
  \end{tabular}
\end{center}
}
\end{table}
In the numerical experiments, we also fix the number of iterations to be $T=10$ and test
several different per-iteration sample sizes: $m=100,\,500$ and $1000$, for both IGPR and SNL. 
In IGPR, $\rho$ is chosen so that, for $m=100$, all samples are used to construct the GP model, and for $m=500$ and $1000$, 200 samples are used. 
Once again we test both methods for 50 times and summarize the average estimation errors and their STD in  
Table~\ref{tab:errormeanAndStd_blowfly}. 
First we can see from the table that  the estimation errors 
of both methods are quite small compared to the prior.
Second, similar to the first example, SNL seems to have a better performance for $m=100$, and as the sample size increases,
IGPR becomes more accurate with respect to the estimator error. 
More precisely for $m=1000$ (total sample size 10,000), IGPR yields smaller estimation errors in all but one parameters,
and more importantly, in this case, IGPR results in much smaller STD for all the six parameters, which, once again,
demonstrates that IGPR provides more statistically stable estimates. 
We also provide in the table the estimation errors of a single SMC-ABC simulation with more than 30,000 samples, 
and one can see that the errors are one order of magnitude larger than those of the other two methods;
since SMC-ABC is highly expensive, repeated simulations are not conducted.   

We also plot the marginal posterior distributions obtained by both methods with 10,000 samples
in Fig.~\ref{f:eg3} to provide some visualized comparison, and the figure does show that both methods actually perform well for this example,
with some minor differences on different dimensions. 
In addition to the estimation accuracy, we are also interested the computational cost of the two methods, and thus we show
in Fig.~\ref{f:eg3_timecost} the wall-clock time of the methods, plotted as a function of sample size.  
One can see from the figures that, IGPR is substantially more efficient than SNL and the cost also increases much slower than SNL with respect 
to sample size. For example, for $m=100$, the time cost of SNL is 164 seconds and that
of IGPR is 1.5 seconds, while
for $m=1000$, the cost of SNL and IGPR are 950 and 2.0 seconds respectively.
We note here that the lower computational cost of IGPR is partially due to use of the local GP model which only employs  a small portion of the total samples. 
We should emphasize that an important trade-off for IGPR is that it can only calculate the marginal posterior distributions of the parameters while SNL can obtain the joint one.

\begin{figure}
  \centering
	\includegraphics[width=1\linewidth]{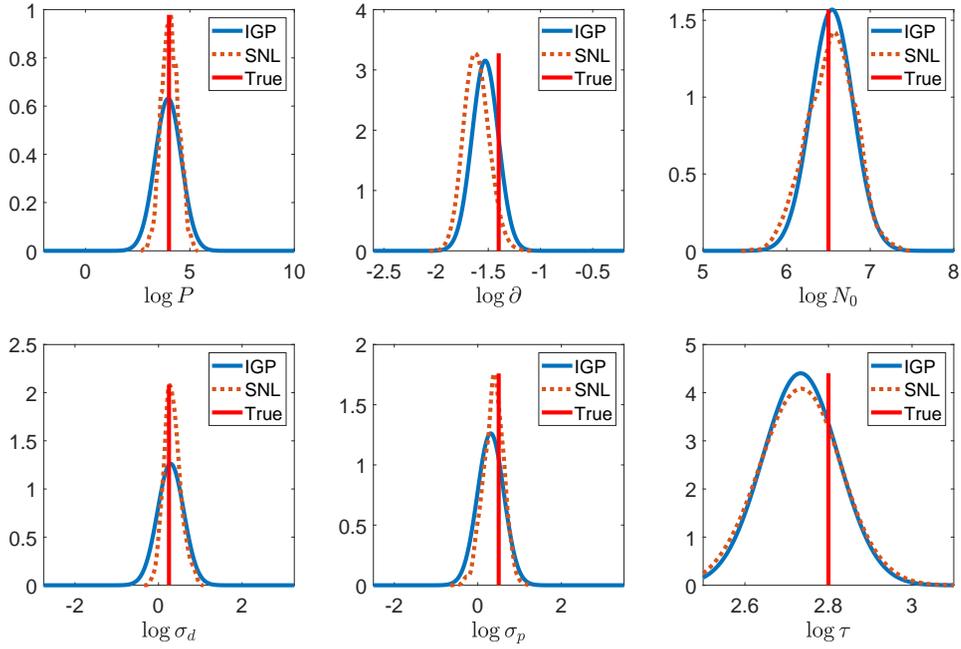}
	\caption{For the blowfly population example. The marginal posterior distribution computed by IGPR and SNL with $10,000$ sample size.
	In both plots, the red solid lines indicate the true parameter values.} 
	\label{f:eg3}
\end{figure}

\begin{figure}
  
  \centerline{\includegraphics[width=.4\linewidth]{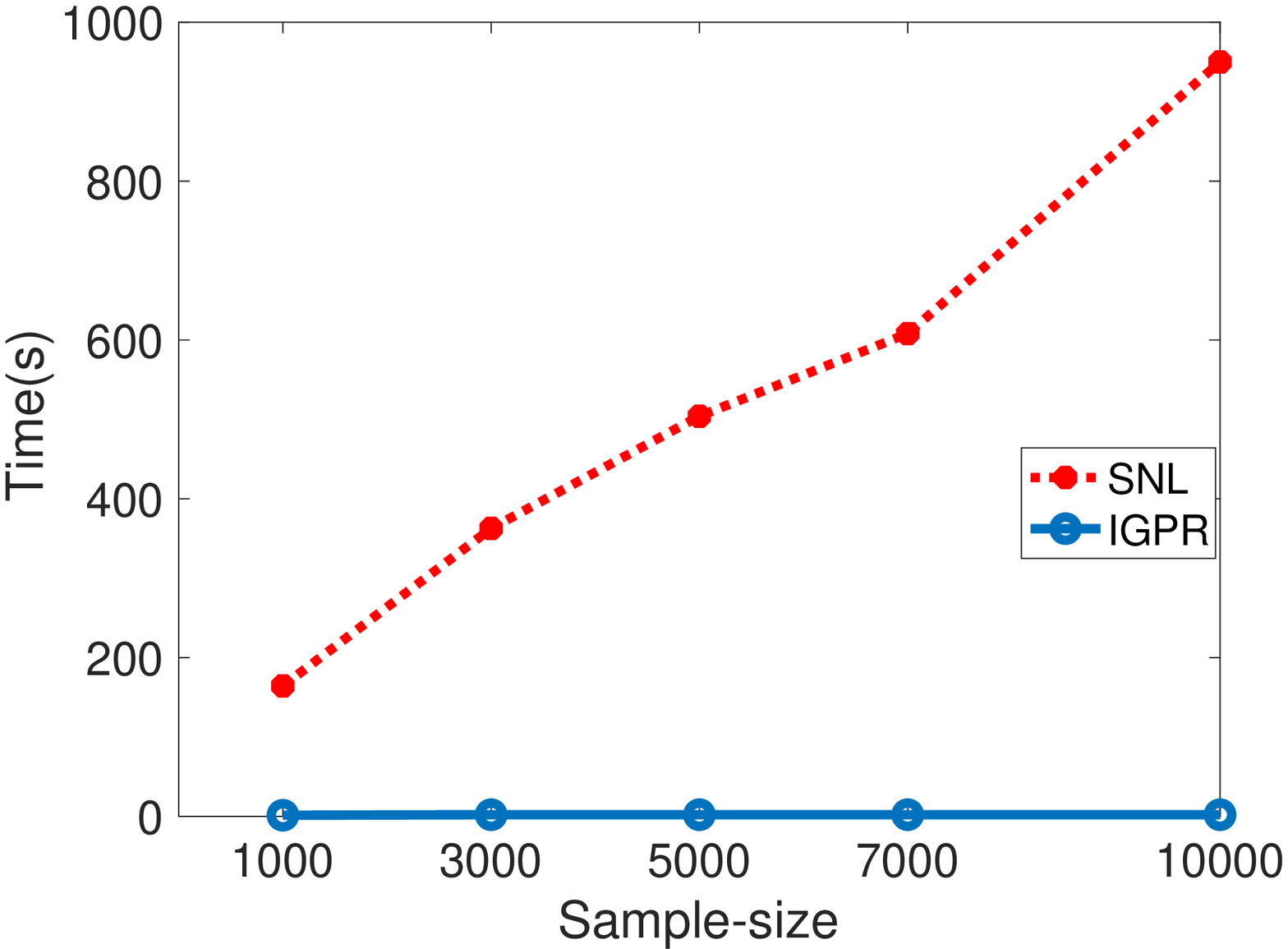}
	\includegraphics[width=.4\linewidth]{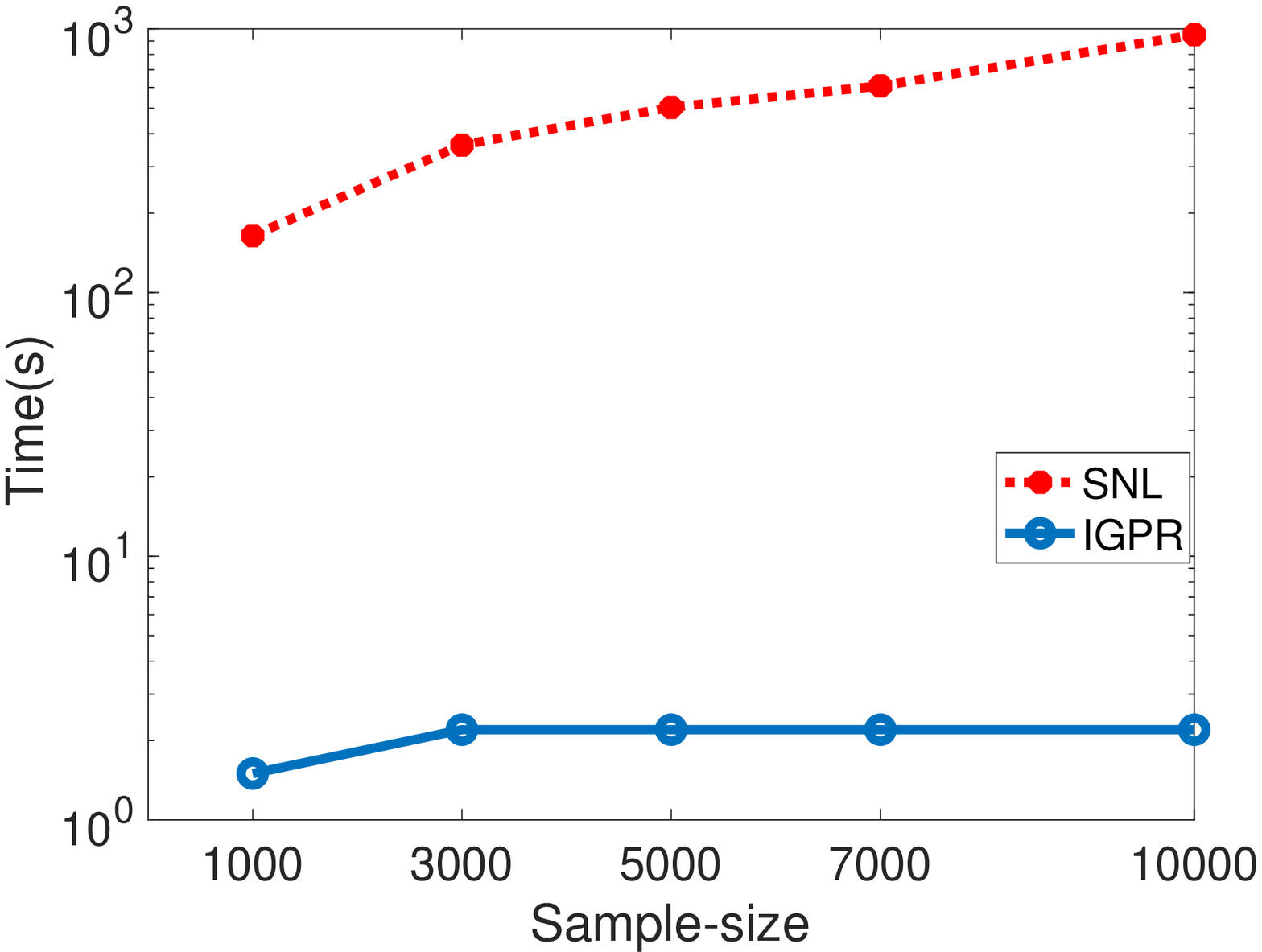}}
	\caption{For the blowfly population example. Time cost (in seconds) of SNL and IGPR with respect to different sample sizes:
	the left figure is on a linear scale and the right one is on a logarithmic scale.} 
	\label{f:eg3_timecost}
\end{figure}

\subsection{Hodgkin-Huxley model}
Our last example is the Hodgkin-Huxley (HH) model that is also used as an application example in \cite{papamakarios2019sequential}.
The HH model describes how the electrical potential measured across the cell membrane of a neuron varies over time as a function of current injected through an electrode. 
 In particular we use the mathematical model \cite{pospischil2008minimal} which consists of  five coupled ordinary differential equations, and is solved numerically using NEURON~\cite{carnevale2006neuron}. 
 The model equations as well as the parameters are detailed in Appendix \ref{sec:hheqn}, and 
our goal is to estimate 12 model parameters~(see Appendix~\ref{sec:hheqn} for further details):
\[    ( g_{\ell},\,\bar{g}_{\mathrm{Na}},\, \bar{g}_{\mathrm{K}}, \bar{g}_{\mathrm{M}}, E_{\ell},\, E_{\mathrm{Na}},\,E_{\mathrm{K}},\,V_{\mathrm{T}},
k_{\beta_{\mathrm{n1} }},\, k_{\beta_{\mathrm{n2}} },\,\tau_{\max },\, \sigma),\]
  from some measured signal.
As is explained in \cite{papamakarios2019sequential}, this problem is both mathematically challenging and of practical interest in neuroscience\cite{daly2015hodgkin,lueckmann2017flexible}.

The data used for inference is the
voltage $V$ recorded for 100 ms, generating a time series of
4001 voltage recordings, and as usual the 16 prescribed statistics are extracted and used as the data $\tilde{\-d}$.  
There are 12 unknown model parameters which are redefined as $\boldsymbol{\theta}=\left(\theta_{1}, \ldots, \theta_{12}\right)$ as is shown 
in Table~\ref{tab:hh_prior}. 
The ground truth of these parameters and the priors used in this example are also provided in Table  \ref{tab:hh_prior}. 


\begin{table}[htbp]
{\footnotesize
  \caption{ For the HH model example. The ground truth and the prior distributions of the model parameters.}  \label{tab:hh_prior}
\begin{center}
  \begin{tabular}{|c|c|c|c|c|c|} \hline
     parameter&$\theta_{1}=\log g_{\ell}$& $\theta_{2}=\log \bar{g}_{\mathrm{Na}}$& $\theta_{3}=\log \bar{g}_{\mathrm{K}}$& $\theta_{4}=\log \bar{g}_{\mathrm{M}} $ \\ \hline
     truth  & -4.60 & 2.99 & 1.60 & -4.96 \\\hline
  prior&$U[-5.30$, $-4.20$] & $U[2.30$, $3.40]$ & $U[0.92$,$2.01]$ & $U[-5.65$, $-4.55]$ \\\hline\hline
    parameter&$\theta_{5}=\log -E_{\ell}$& $\theta_{6}=\log E_{\mathrm{Na}}$& $\theta_{7}=\log -E_{\mathrm{K}}$& $\theta_{8}=\log -V_{\mathrm{T}}$\\ \hline
     truth  & 4.24 & 3.91 & 4.60 & 4.09 \\\hline
  prior& $U[3.55$,$4.65]$& $U[3.21$,$4.32]$& $U[3.91$, $5.01]$ & $U[3.40$, $4.50]$ \\\hline\hline
   parameter& $\theta_{9}=\log k_{\beta_{\mathrm{n1} }}$& $\theta_{10}=\log k_{\beta_{\mathrm{n2}} } $& $\theta_{11}=\log \tau_{\max }$& $ \theta_{12}=\log \sigma$\\ \hline
     truth  & -0.69 & 3.68 & 6.90& 0\\\hline
  prior & $U[-1.39$, $-0.29]$ & $U[3.00$, $4.09]$& $U[6.21$,$7.31]$ & $U[-0.69$,$0.41]$  \\
 \hline
  \end{tabular}
\end{center}
}
\end{table}

\begin{table}[htbp]
{\footnotesize
  \caption{ For the HH model example.  Means and standard deviations of the estimation errors in the HH model. The smaller errors of the two methods are shown in bold, and
  the numbers inside the parentheses  are the standard deviations.}  \label{tab:errormeanAndStd_hh}
\begin{center}
  \begin{tabular}{|c|c|c|c|c|c|} \hline
    Sample-size &  Method & $\log g_\ell$& $\log \bar{g}_{Na} $&  $\log E_{Na}$ & $\log \bar{g}_{\mathrm{M}}$ \\ \hline
 1000   & IGPR & $0.14(0.035)$ & ${\bf0.06}(0.026)$ & ${\bf0.09}( 0.078)$ & $ {0.15}(0.028)$ \\
($m=100$)   & SNL & ${\bf0.12}(0.102)$ & $ 0.10(0.059)$ & $0.27(0.134)$ & ${\bf0.13}(0.065)$ \\ \hline
   
    5000   & IGPR & ${\bf0.12}(0.031)$ & ${\bf0.05}(0.011)$ & ${\bf0.11}(0.068)$ & $0.15(0.015)$ \\
($m=500$)   & SNL             & $0.13(0.110)$ & $0.07(0.031)$ & $0.17(0.093)$ & ${\bf0.11}(0.099)$ \\ \hline
   
    10000   & IGPR & ${\bf0.08}(0.019)$ & $0.04(0.007)$ & ${\bf0.08}(0.039)$ & $0.13(0.009)$ \\
  ($m=1000$) & SNL&                $0.11(0.075)$ & $0.04(0.022)$ & $0.16(0.078)$ & $0.13(0.065)$\\ \hline
   
   Sample-size &  Method & $\log -E_\ell $ & $\log E_{Na}$& $\log -E_K$& $\log -V_T $\\ \hline
 1000   & IGPR & ${\bf0.08}(0.053)$ & ${\bf0.003}(0.002)$ &  $0.18(0.057)$ &${\bf0.03}(0.018)$\\
  ($m=100$) & SNL & $0.15(0.060)$ & $ 0.08(0.073)$ &  ${\bf0.11}(0.076)$ & $0.12(0.099)$\\ \hline
   
    5000   & IGPR & ${\bf0.06}(0.053)$ & ${\bf0.002}(0.001)$ & ${\bf0.15}( 0.028)$ & ${\bf0.03}(0.007)$\\
  ($m=500$) & SNL             & $0.17(0.084)$ & $0.03(0.033)$ & $0.20( 0.058)$ & $0.05(0.035)$\\ \hline
   
    10000   & IGPR & ${\bf0.06}(0.031)$ & ${\bf0.001}(0.0006)$&  ${\bf0.17}(0.018)$ & ${\bf0.03}(0.006)$\\
   ($m=1000$) & SNL        & $0.15(0.092)$ & $0.01(0.013)$&  $0.20(0.042)$ &$0.04(0.020)$ \\ \hline
   
   Sample-size &  Method & $\log k_{\beta_{n1}}$& $\log k_{\beta_{n2}} $& $\log \tau_{max} $ & $\log \sigma$\\ \hline
 1000   & IGPR & ${\bf0.10}( 0.083)$ & $ {\bf0.22}(0.083)$ & ${\bf0.15}(0.042)$ &${\bf0.005}(0.005)$\\
   ($m=100$) & SNL & $0.13(0.077)$ & $0.29(0.077)$ & $0.17(0.137)$ & $0.009(0.006)$\\ \hline
   
    5000   & IGPR & ${\bf0.08}(0.052)$ & $0.22(0.074)$ & ${\bf0.16}( 0.012)$ & ${\bf0.002}(0.002)$\\
    ($m=500$)& SNL             & $0.17(0.087)$ & ${\bf0.20}(0.131)$ & $0.21( 0.124)$ & $0.003(0.003)$\\ \hline
   
    10000   & IGPR  & ${\bf0.08}(0.038)$ & $0.22(0.082)$ & ${\bf0.15}(0.010)$ & ${\bf0.002}(0.002)$\\
   ($m=1000$)& SNL           & $0.16(0.080)$ & ${\bf0.15}(0.128)$ & $0.18(0.100)$ &$0.003(0.003)$ \\ \hline
  \end{tabular}
\end{center}
}
\end{table}
 
The setup of our numerical experiments is  the same as that in Section \ref{sec:blowfly} and is not repeated here.
The average estimation errors of IGPR and SNL are compared in Table~\ref{tab:errormeanAndStd_hh}, and in addition,
 we plot the marginal posteriors with 10,000 samples obtained in one trial in Fig.~\ref{f:eg4}.
From the table we can see that errors in IGPR is either comparable to or evidently smaller than that of SNL,
and in fact the result of $\theta_{10}$ with 10,000 samples is the only case where IGPR yields larger (yet still comparable) error than SNL. 
The marginal distribution plots with 10,000 samples largely show the similar behaviors, 
where one can see that the posterior mean of IGPR is evidently 
closer to the truth for several parameters such as $\theta_1$, $\theta_3$, $\theta_7$ and $\theta_9$.
Moreover  IGPR also result in smaller error STD than SNL in all the cases 
with 5,000 or 10,000 samples, suggesting that IGPR is more statistically stable in this example. 
Finally we compare the computational cost for both methods by plotting their wall-clock time cost in Fig.~\ref{f:eg4_timecost}:
 the plots show that the computational time of SNL grows significantly faster than IGPR
and at sample size 10,000, the time cost for SNL is 1504 seconds while that for IGPR is 12, indicating 
substantial advantage of IGPR in terms of computational efficiency.

\begin{figure}
  \centering
	\includegraphics[width=1\linewidth]{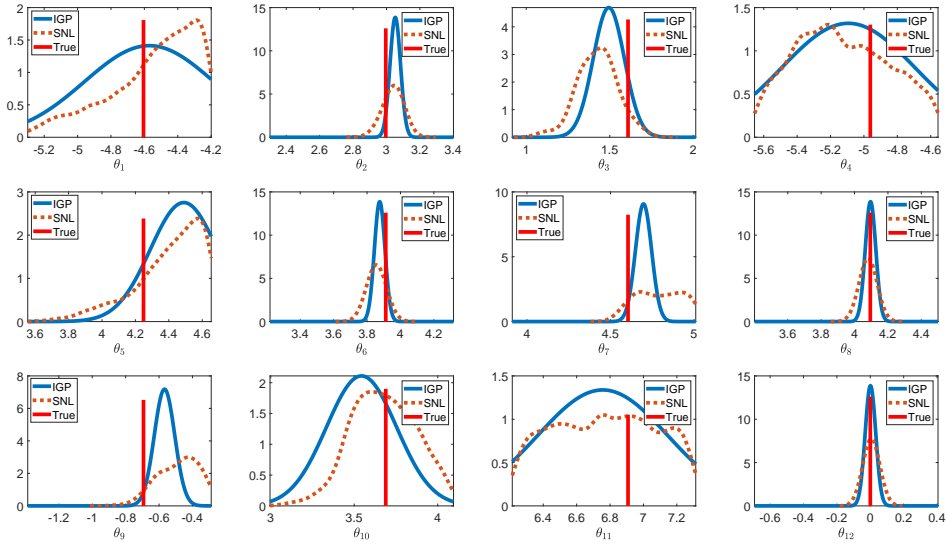}
	\caption{For the HH model example. The marginal posterior distribution computed by IGPR and SNL with $10,000$ sample size.
	In both plots, the red solid lines indicate the true parameter values.} 
	\label{f:eg4}
\end{figure}

\begin{figure}
  
  \centerline{\includegraphics[width=.4\linewidth]{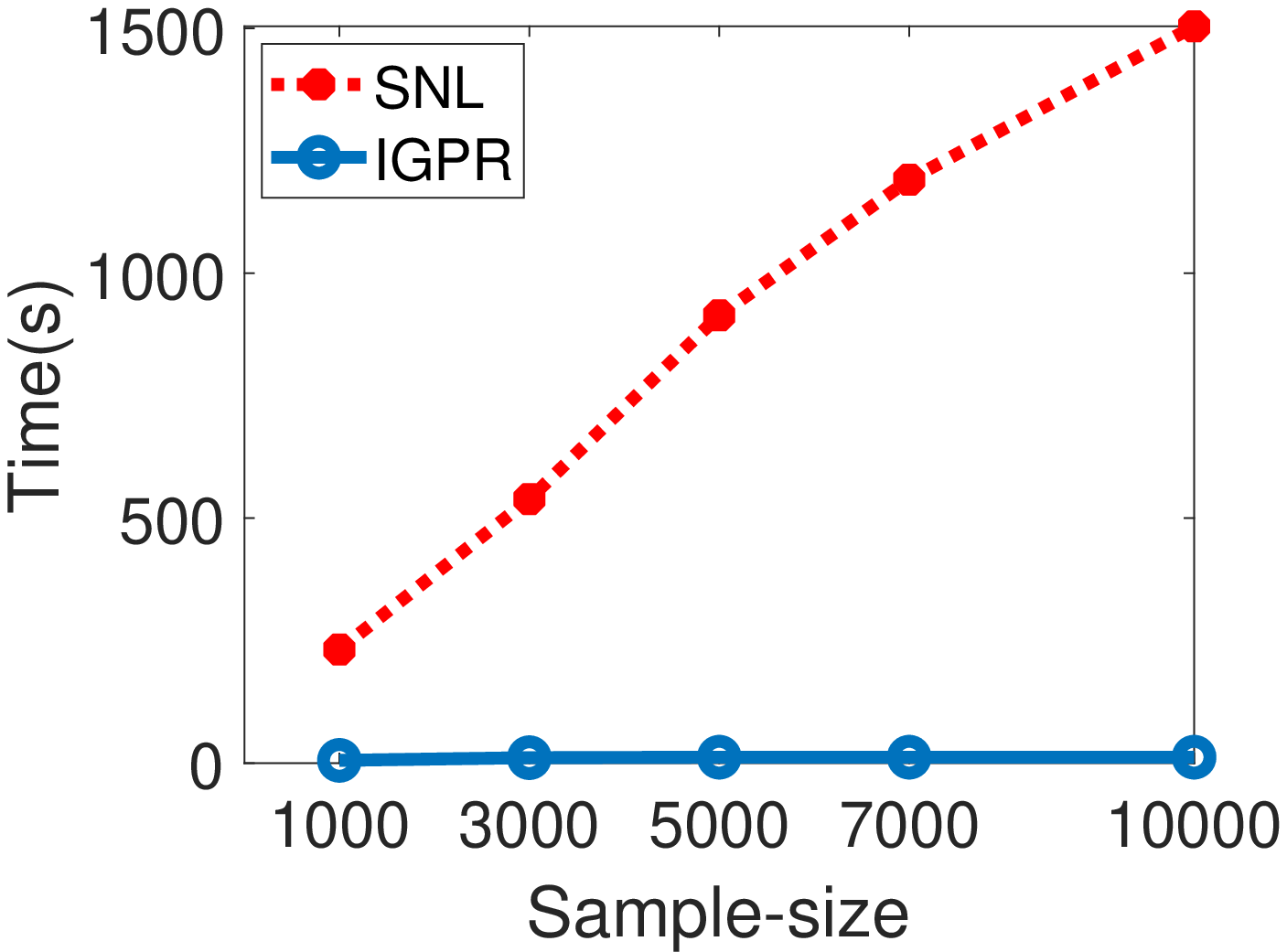}
	\includegraphics[width=.4\linewidth]{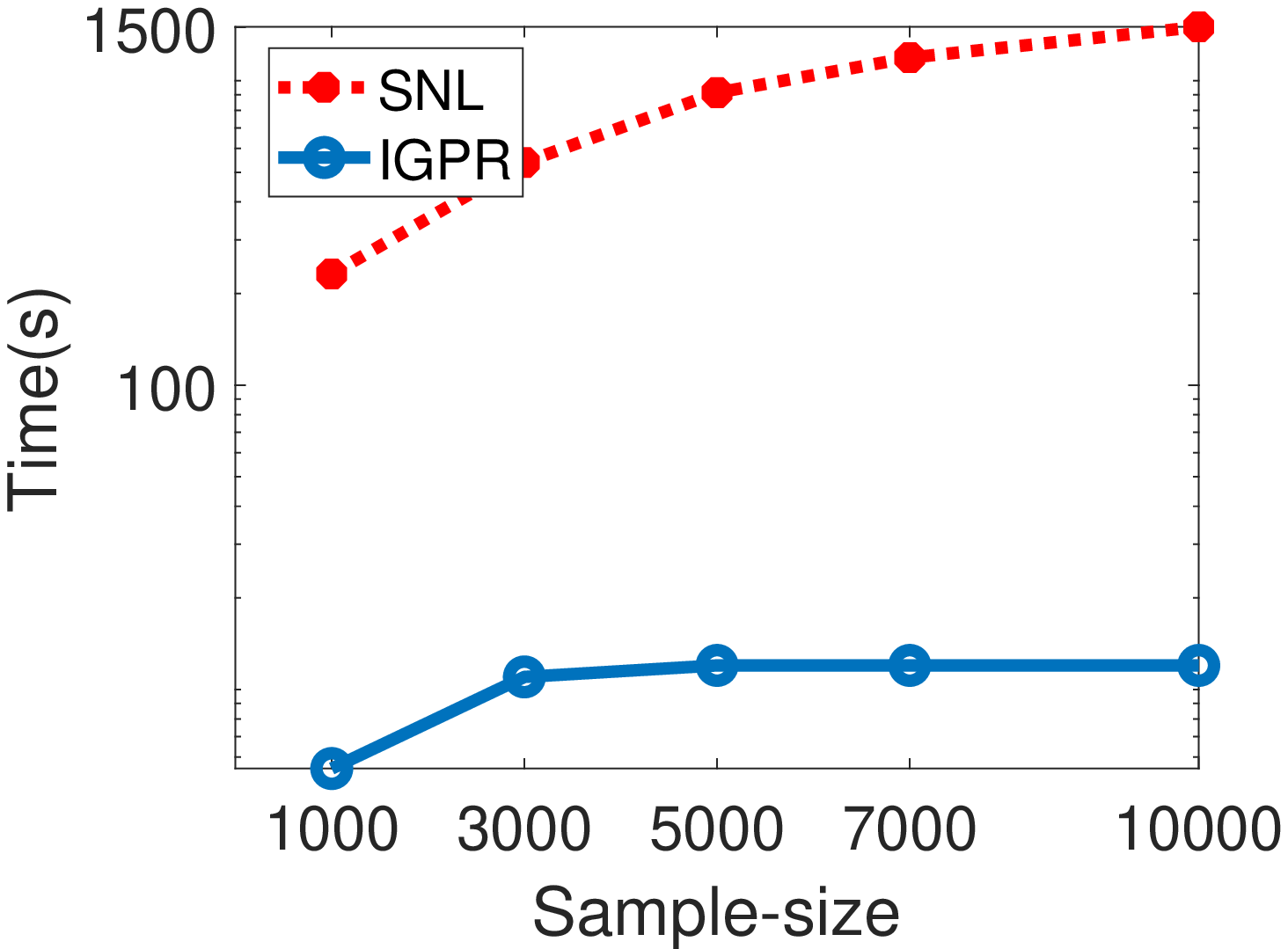}}
	\caption{For the HH model example. Time cost (in seconds) of SNL and IGPR with respect to different sample sizes:
	the left figure is on a linear scale and the right one is on a logarithmic scale.} 
	\label{f:eg4_timecost}
\end{figure}

\section{Conclusions}
\label{sec:conclusions}

In conclusion, we have proposed an efficient method for Bayesian inference problems with
intractable likelihood functions. The method constructs
a GP regression from the output of the underlying simulation model to the input of it, which leads 
to an approximation of the marginal posterior distribution. 
With both mathematical and practical examples, we illustrate that the proposed method can be a competitive tool 
to approximately compute the posterior distributions especially for problems with stringent time constraints. 
Finally we discuss some limitations of the proposed method as well as some issues that need to be address in the future. 
Obviously the approximate posterior distribution obtained by the method is a weighted Gaussian, which thus requires that
the true posterior distribution should not deviate too much from Gaussian. 
Strongly non-Gaussian posteriors such as multimodal distributions may cause problems for the proposed method. 
Second only marginal posterior can be estimated by IGPR method and therefore information lying in the joint distribution such as the 
correlation between parameters can not be obtained. To this end, we believe the multi-output GP methods~\cite{boyle2004dependent} may provide a viable path to 
directly approximating the joint  posterior distribution.  
We plan to address these issues in future studies. 

\appendix '

\section{Proof of Theorem~\ref{th:dl}}\label{sec:proof}

We first prove two  lemmas. The first lemma considers the convergence with respect to $\epsilon$. 
For convenience sake,  for a given training set $D_\epsilon(\tilde{\-d})$,
 we  define $\-X_{\epsilon}=\{\-d_i|(\-d_i,\theta_i)\in D_\epsilon(\tilde{\-d})\}$ and 
 $\-Y_\epsilon =\{\theta_i|(\-d_i,\theta_i)\in D_\epsilon(\tilde{\-d})\}$ to denote the input and output data points in the training set.
\begin{lemma}\label{lemma:ep}
Let $\pi_{GP}(\theta|\tilde{\-d},D_\epsilon(\tilde{\-d}))$ be the GP model constructed with $D_\epsilon(\tilde{\-d})$, evaluated at $\tilde{\-d}$ for $\forall \,\tilde{\-d}\in R^{n_d}$,
and let $\mu^{m}_\epsilon(\tilde{\-d})$ and $\sigma_\epsilon^{n_d}(\tilde{\-d})$ be respectively the mean and standard deviation of 
$\pi_{GP}(\theta|{\tilde{\-d}},D_\epsilon(\tilde{\-d}))$.
Suppose that the kernel function of the GP model  $k(\cdot,\cdot)$ satisfies Assumption \ref{ass:kf},
we have, 
\begin{equation}\label{eq:th_app}
\lim_{\epsilon\rightarrow 0}\mu^{m}_{\epsilon}(\tilde{\-d})\xrightarrow[\epsilon\rightarrow0]{d}\mu^{m}_{0}(\tilde{\-d}).
\end{equation}
\end{lemma}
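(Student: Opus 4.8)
The plan is to regard the GP posterior mean at $\tilde{\-d}$ as an affine function of the response vector whose coefficients are fixed by the kernel matrices, and to analyze the coefficients and the responses separately as $\epsilon\to0$. Writing $\-X_\epsilon$ and $\-Y_\epsilon$ for the input and output data in $D_\epsilon(\tilde{\-d})$, Eq.~\eqref{eq:mu} gives
\[
\mu^m_\epsilon(\tilde{\-d}) = \mu(\tilde{\-d}) + \-w_\epsilon^{T}\bigl(\-Y_\epsilon-\mu(\tilde{\-d})\-1_m\bigr),\qquad \-w_\epsilon^{T}:=k(\tilde{\-d},\-X_\epsilon)\bigl(k(\-X_\epsilon,\-X_\epsilon)+\nu I\bigr)^{-1}.
\]
I will show that $\-w_\epsilon$ converges to a deterministic limit while $\-Y_\epsilon$ converges in distribution, and then combine the two by a Slutsky-type argument.

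First I would establish that $\-w_\epsilon\to\-w_0$ for every realization. Every input point in $D_\epsilon(\tilde{\-d})$ lies within $\epsilon$ of $\tilde{\-d}$ and hence within $2\epsilon$ of every other input point, so Assumption~\ref{ass:kf}(b) forces all cross entries $k(\tilde{\-d},\-d_i)$ and off-diagonal entries $k(\-d_i,\-d_{i'})$ to approach the common constant $c$, while the diagonal entries equal $c$ exactly by part~(a). Consequently, uniformly over all admissible configurations of $\-X_\epsilon$, $k(\tilde{\-d},\-X_\epsilon)\to c\,\-1_m^{T}$ and $k(\-X_\epsilon,\-X_\epsilon)\to c\,\-1_m\-1_m^{T}$. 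Since $\nu>0$ keeps $k(\-X_\epsilon,\-X_\epsilon)+\nu I$ uniformly invertible, the Sherman--Morrison identity applied to $(c\,\-1_m\-1_m^{T}+\nu I)^{-1}$ yields the explicit deterministic limit $\-w_0=\tfrac{c}{\nu+cm}\,\-1_m$; in particular $\-w_\epsilon\xrightarrow{a.s.}\-w_0$.

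Next I would treat the responses. The pairs in $D_\epsilon(\tilde{\-d})$ are i.i.d.\ draws from $\pi_\epsilon(\theta,\-d)$, so the entries of $\-Y_\epsilon$ are i.i.d.\ from the constrained marginal proportional to $\pi_q(\theta)\int_{\rho(\-d,\tilde{\-d})<\epsilon}\pi(\-d|\theta)\,d\-d$. Using continuity of $\-d\mapsto\pi(\-d|\theta)$ at $\tilde{\-d}$ together with a dominated-convergence argument for the normalizing constant, this marginal converges weakly to the proposal-prior posterior $\pi_q(\theta|\tilde{\-d})$, so that $\-Y_\epsilon\xrightarrow{d}\-Y_0$, where $\-Y_0$ stacks $m$ i.i.d.\ draws from $\pi_q(\theta|\tilde{\-d})$ and is precisely the response vector that defines $\mu^m_0$.

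Finally, because $\-w_\epsilon$ converges almost surely to the constant vector $\-w_0$ and $\-Y_\epsilon$ converges in distribution to $\-Y_0$, Slutsky's theorem --- which applies despite the dependence between $\-w_\epsilon$ and $\-Y_\epsilon$, since the former has a constant limit --- gives $\-w_\epsilon^{T}(\-Y_\epsilon-\mu(\tilde{\-d})\-1_m)\xrightarrow{d}\-w_0^{T}(\-Y_0-\mu(\tilde{\-d})\-1_m)$, and adding the constant $\mu(\tilde{\-d})$ produces $\mu^m_\epsilon(\tilde{\-d})\xrightarrow{d}\mu^m_0(\tilde{\-d})$, as claimed. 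I expect the main obstacle to be the response step: Assumption~\ref{ass:kf} is tailored to make the kernel-matrix limit clean and deterministic, but the weak convergence of the constrained $\theta$-marginal to $\pi_q(\theta|\tilde{\-d})$ genuinely requires some extra regularity of the likelihood in $\-d$ near $\tilde{\-d}$ and careful control of the vanishing normalizing constant, which is the delicate point of the argument.
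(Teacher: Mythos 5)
Your proposal is correct and follows essentially the same route as the paper's proof: decompose $\mu^m_\epsilon(\tilde{\-d})$ into a kernel-determined weight vector, which converges deterministically by Assumption~\ref{ass:kf} and continuity of matrix inversion, applied to the response vector $\-Y_\epsilon$, which converges in distribution to $\-Y_0$, and combine the two limits. You are in fact more careful than the paper at the two points it glosses over — the paper asserts $\-Y_\epsilon \xrightarrow{d} \-Y_0$ ``by definition'' and concludes without an explicit Slutsky argument, whereas you justify the weak convergence of the constrained $\theta$-marginal (correctly flagging the implicit continuity requirement on the likelihood near $\tilde{\-d}$) and handle the dependence between weights and responses; your Sherman--Morrison computation of the limiting weights appears in the paper only later, in the proof of Lemma~\ref{lemma:ep0}.
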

\begin{proof} 
Using Assumption~\ref{ass:kf} it is easy to show that for any $\upsilon>0$ there exists $\epsilon>0$, such that for any $\-d_i,\,\-d_j\in D_\epsilon(\tilde{\-d})$, 
\begin{equation}
 |k(\tilde{\-d},\-d_i)-c|<\upsilon,\quad\mathrm{and}\quad |k(\-d_i,\-d_j)-c|<\upsilon, 
\end{equation}
which implies a pointwise convergence, 
 \[\lim_{\epsilon\rightarrow 0}K(\tilde{\-d},\-X_\epsilon)(K(\-X_\epsilon,\-X_\epsilon)+\sigma_\zeta^2I)^{-1}=K(\tilde{\-d},\-X_0)(k(\-X_0,\-X_0)+\sigma_\zeta^2I)^{-1},\]
 thanks to the continuity of matrix inversion. 
By definition we have $\-Y_\epsilon \rightarrow \-Y_0$ in distribution, and it then follows that, 
\begin{equation}
\begin{aligned}
\mu^{m}_{\epsilon}(\-d)&=\mu(\tilde{\-d})+K(\tilde{\-d},\-X_\epsilon)(K(\-X_\epsilon,\-X_\epsilon)+\sigma_\zeta^2I)^{-1}(\-Y_\epsilon-\mu({\tilde{\-d}})\-1_m)\\
&\xrightarrow[\epsilon\rightarrow0]{d}\mu(\tilde{\-d})+K(\tilde{\-d},\-X_0)(k(\-X_0,\-X_0)+\sigma_\zeta^2I)^{-1}(\-Y_0-\mu(\tilde{\-d})\-1_m)\\
&=\mu^{m}_{0}(\tilde{\-d}).
\end{aligned}
\end{equation}
\end{proof}

\begin{lemma}\label{lemma:ep0}
Let $\pi_{GP}(\theta|\tilde{\-d},D_0(\tilde{\-d}))$ be the GP model constructed with $D_0(\tilde{\-d})$, evaluated at $\tilde{\-d}$ for $\forall \,\tilde{\-d}\in R^{n_d}$,
and let $\mu^{m}_0(\tilde{\-d})$ and $\sigma_0^{m}(\tilde{\-d})$ be respectively the mean and standard deviation of 
$\pi_{GP}(\theta|\tilde{\-d},D_0(\tilde{\-d}))$.
Suppose that the kernel function of the GP model  $k(\cdot,\cdot)$ satisfies Assumption \ref{ass:kf}, and we have, 
\begin{equation}
\mu^{m}_{0}(\-d)\xrightarrow[m\rightarrow\infty]{a.s.}\mathbb{E}_{\theta|\tilde{\-d}}[\-\theta].
\end{equation}
\end{lemma}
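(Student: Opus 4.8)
The plan is to use the degeneracy of the training inputs at $\epsilon=0$ to reduce the GP posterior mean to an explicit shrinkage estimator, and then to invoke the strong law of large numbers (SLLN).

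First I would observe that, in the limiting training set $D_0(\tilde{\-d})$, every input coincides with $\tilde{\-d}$, so $\-X_0$ consists of $m$ copies of $\tilde{\-d}$. By Assumption~\ref{ass:kf}(a) the kernel equals the constant $c$ both on the diagonal and, because all the inputs are identical, at every off-diagonal entry as well; hence $K(\-X_0,\-X_0)=c\,\-1_m\-1_m^{T}$ and $K(\tilde{\-d},\-X_0)=c\,\-1_m^{T}$. The matrix to be inverted in \eqref{eq:mu}, namely $c\,\-1_m\-1_m^{T}+\sigma_\zeta^2 I$, is a rank-one update of $\sigma_\zeta^2 I$, which I would invert with the Sherman--Morrison formula.

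Substituting this inverse into the mean formula \eqref{eq:mu}, I expect the matrix products to collapse and leave the convex combination
\begin{equation}
\mu^{m}_{0}(\tilde{\-d})=\mu(\tilde{\-d})+\frac{cm}{\sigma_\zeta^2+cm}\big(\bar\theta_m-\mu(\tilde{\-d})\big),\qquad \bar\theta_m=\frac{1}{m}\sum_{i=1}^{m}\theta_i,
\end{equation}
so that $\mu^m_0$ is a weighted average of the GP prior mean $\mu(\tilde{\-d})$ and the empirical output mean $\bar\theta_m$, with the weight on $\bar\theta_m$ tending to one as $m\to\infty$ provided $\sigma_\zeta^2$ stays bounded and $c>0$. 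Since by construction the pairs $(\theta_i,\-d_i)$ in $D_0$ are i.i.d.\ draws whose $\theta$-marginal is the conditional law $\pi_q(\theta\,|\,\tilde{\-d})$ obtained as the $\epsilon\to0$ limit of the $\theta$-marginal of $\pi_\epsilon$, the SLLN gives $\bar\theta_m\xrightarrow[m\to\infty]{a.s.}\mathbb{E}_{\theta|\tilde{\-d}}[\theta]$. Combining the almost-sure convergence of $\bar\theta_m$ with the deterministic limit of the coefficient then yields $\mu^m_0(\tilde{\-d})\xrightarrow{a.s.}\mathbb{E}_{\theta|\tilde{\-d}}[\theta]$.

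The step I expect to require the most care is justifying the coefficient limit $\tfrac{cm}{\sigma_\zeta^2+cm}\to1$: since $\sigma_\zeta^2$ is fitted by MLE and may in principle depend on $m$, one must argue that it remains bounded away from $\infty$ (it may even tend to $0$, which only accelerates the convergence), so that the shrinkage weight is not prevented from approaching one. A minor additional point is that the a.s.\ convergence lives on the full-measure event where the SLLN holds, and the purely deterministic coefficient limit does not disturb it.
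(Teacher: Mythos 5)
Your proposal follows essentially the same route as the paper's proof: exploit the degeneracy of $D_0(\tilde{\-d})$ to get $K(\-X_0,\-X_0)=c\,\-1_m\-1_m^{T}$ and $K(\tilde{\-d},\-X_0)=c\,\-1_m^{T}$, invert via Sherman--Morrison, collapse the posterior mean to the shrinkage form $\frac{cm\nu^{-1}}{1+cm\nu^{-1}}\bar\theta_m+\frac{1}{1+cm\nu^{-1}}\mu(\tilde{\-d})$, and conclude by the strong law of large numbers applied to the i.i.d.\ samples $\theta_i\sim\pi(\theta|\tilde{\-d})$. Your added remark that the MLE-fitted noise variance must stay bounded as $m\to\infty$ for the shrinkage weight to tend to one is a legitimate refinement that the paper's proof passes over silently (it treats $\nu$ as fixed), but it does not change the argument.
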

\begin{proof}  As $\epsilon=0$, we have $\-d_i=\tilde{\-d}$ for $\forall \-d_i\in D_0(\tilde{\-d})$, and it follows that $K(\-d,\-X_0)=c\-1_{m}$,
 and $K(\-X_0,\-X_0)=c\-1_{m\times m}$ where $I_{m\times m}$ is a $m\times m$ matrix of ones. From Eq.~\eqref{eq:mu}, we have
 \begin{equation}
 \begin{aligned}
 \mu^{m}_{0}(\tilde{\-d})&=\mu(\tilde{\-d})+K(\tilde{\-d},\-\-X_0)(k(\-X_0,\-X_0)+\nu I)^{-1}(\-Y_0-\mu(\tilde{\-d})\-1_m)\\
 &=\mu(\tilde{\-d})+c\-1^T_{m}(c\-1_{m\times m}+\nu I)^{-1}(\-Y_0-\mu(\tilde{\-d})\-1_m)\\
 &=\mu(\tilde{\-d})+c\-1^T_{m}(c\-1_{m}\-1_{m}^T+\nu I)^{-1}(\-Y_0-\mu(\tilde{\-d})\-1_m)
 \end{aligned}
 \end{equation}
 and applying the Sherman–Morrison formula to the equation above yielding, 
 \begin{equation}
 \begin{aligned}
 \mu^{m}_{0}(\tilde{\-d})&=\mu(\tilde{\-d})+c\-1^T_{m}(\nu^{-1}I-\frac{c\nu^{-2}\-1_{m}\-1_{m}^T}{1+cm\nu^{-1}})(\-Y_0-\mu(\tilde{\-d})\-1_m)\\
 &=\mu(\tilde{\-d})+\frac{c\nu^{-1}\-1^T_{m}}{1+cm\nu^{-1}}(\-Y_0-\mu(\tilde{\-d})\-1_m)\\
 &=\frac{c\nu^{-1}}{1+cm\nu^{-1}}\-1_{m}^T\-Y_0+\frac{1}{1+cm\nu^{-1}}\mu(\tilde{\-d})\\
 &=\frac{cm\nu^{-1}}{1+cm\nu^{-1}}\left(\frac1m\sum_{i=1}^m\theta_i\right)+\frac{1}{1+cm\nu^{-1}}\mu(\tilde{\-d}),
 \end{aligned}
 \end{equation}
 where $\{\theta_i\}_{i=1}^m$ are i.i.d. samples following the distribution $\pi(\theta|\tilde{\-d})$. 
 Thus by the strong law of large numbers we obtain, 
 \begin{equation}
 \mu_0^m(\tilde{\-d})\xrightarrow[m\rightarrow\infty]{a.s.}\mathbb[E]_{\theta|\tilde{\-d}}[\theta].
 \end{equation}
 \end{proof}
Theorem \ref{th:dl} is a direct consequence of Lemmas \ref{lemma:ep} and \ref{lemma:ep0}. 

\section{Description of the HH model}\label{sec:hheqn}
In this section we describe the equations of the HH model. 
The main membrane equation is:
\begin{equation}
	C_m \frac{dV}{dt}=-I_l-I_{Na}-I_K-I_M-I_e,
\end{equation}
where $C_m=1$, and we need to specify the terms $I_l$, $I_{Na}$, $I_K$, $I_M$ and $I_e$ in the equation.
{First  $I_l$ is in the form of (with parameters $g_l, E_l$):}
\begin{equation}
	I_l=g_l(V-E_l).
\end{equation}
 {Next the equations for $I_{Na}$, with parameters $\bar{g}_\mathrm{Na}, E_\mathrm{Na}$ and $V_\mathrm{T}$, are,}
\begin{align*}
I_\mathrm{Na}&=\bar{g}_\mathrm{Na}m^3h(V-E_\mathrm{Na}),\\
\frac{\mathrm{d} m}{\mathrm{d} t}&=\alpha_{m}(V)(1-m)-\beta_{m}(V) m, \\
\frac{\mathrm{d} h}{\mathrm{d} t}&=\alpha_{h}(V)(1-h)-\beta_{h}(V) h, \\
\alpha_{m}&=\frac{-0.32\left(V-V_{T}-13\right)}{\exp \left[-\left(V-V_{T}-13\right) / 4\right]-1}, \\
\beta_{m}&=\frac{0.28\left(V-V_{T}-40\right)}{\exp \left[\left(V-V_{T}-40\right) / 5\right]-1}, \\
\alpha_{h}&=0.128 \exp \left[-\left(V-V_{T}-17\right) / 18\right], \\
\beta_{h}&=\frac{4}{1+\exp \left[-\left(V-V_{T}-40\right) / 5\right]}.
\end{align*}
 The equations for $I_{K}$ depend on four parameters: $\bar{g_\mathrm{K}}, E_\mathrm{K}, \beta_{\mathrm{n1}}, \beta_{\mathrm{n2}}$,
 and are given by,
\begin{align*}
I_\mathrm{K}&=\bar{g_\mathrm{K}}n^4(V-E_\mathrm{K}),\\
\frac{\mathrm{d} n}{\mathrm{d} t}&=\alpha_{n}(V)(1-n)-\beta_{n}(V) n, \\
\alpha_{n}&=\frac{-0.032\left(V-V_{T}-15\right)}{\exp \left[-\left(V-V_{T}-15\right) / 5\right]-1}, \\
\beta_{n}&=\beta_\mathrm{n1} \exp \left[-\left(V-V_{T}-10\right) / \beta_\mathrm{n2}\right].
\end{align*}
The term $I_{M}$ which depends on two parameters $\bar{g}_\mathrm{M}$ and $\tau_\mathrm{max}$, is given as the following, 
\begin{align*}
I_{\mathrm{M}}&=\bar{g}_{\mathrm{M}} p\left(V-E_{\mathrm{K}}\right),\\
\frac{\mathrm{d} p}{\mathrm{d} t}&=\left(p_{\infty}(V)-p\right) / \tau_{p}(V) ,\\
p_{\infty}(V)&=\frac{1}{1+\exp [-(V+35) / 10]}, \\
\tau_{p}(V)&=\frac{\tau_{\max }}{3.3 \exp [(V+35) / 20]+\exp [-(V+35) / 20]}.
\end{align*}
Finally $I_{e}$ is specified as 
\begin{equation*}
	\frac{1}{10}I_{e}\sim N(-0.5\mathrm{nA}, \sigma^2),
\end{equation*}
with a parameter  $\sigma$.
The initial condition is set to be
\begin{equation}
m(0)=h(0)=n(0)=p(0)=0 \quad \text { and } \quad V(0)=-70 \mathrm{mV}.
\end{equation}
The physical meanings of the variables and parameters can be found in \cite{pospischil2008minimal}.

\bibliographystyle{siamplain}
\bibliography{igp}

\begin{thebibliography}{10}

\bibitem{acerbi2018variational}
{\sc L.~Acerbi}, {\em Variational bayesian monte carlo}, in Advances in Neural
  Information Processing Systems, 2018, pp.~8213--8223.

\bibitem{andrieu2003introduction}
{\sc C.~Andrieu, N.~De~Freitas, A.~Doucet, and M.~I. Jordan}, {\em An
  introduction to {MCMC} for machine learning}, Machine learning, 50 (2003),
  pp.~5--43.

\bibitem{beaumont2002approximate}
{\sc M.~A. Beaumont, W.~Zhang, and D.~J. Balding}, {\em Approximate bayesian
  computation in population genetics}, Genetics, 162 (2002), pp.~2025--2035.

\bibitem{blei2017variational}
{\sc D.~M. Blei, A.~Kucukelbir, and J.~D. McAuliffe}, {\em Variational
  inference: A review for statisticians}, Journal of the American Statistical
  Association, 112 (2017), pp.~859--877.

\bibitem{boyle2004dependent}
{\sc P.~Boyle and M.~Frean}, {\em Dependent gaussian processes}, Advances in
  neural information processing systems, 17 (2004), pp.~217--224.

\bibitem{brooks2011handbook}
{\sc S.~Brooks, A.~Gelman, G.~Jones, and X.-L. Meng}, {\em Handbook of markov
  chain monte carlo}, CRC press, 2011.

\bibitem{carnevale2006neuron}
{\sc N.~T. Carnevale and M.~L. Hines}, {\em The NEURON book}, Cambridge
  University Press, 2006.

\bibitem{chou2009recent}
{\sc I.-C. Chou and E.~O. Voit}, {\em Recent developments in parameter
  estimation and structure identification of biochemical and genomic systems},
  Mathematical biosciences, 219 (2009), pp.~57--83.

\bibitem{daly2015hodgkin}
{\sc A.~C. Daly, D.~J. Gavaghan, C.~Holmes, and J.~Cooper}, {\em
  Hodgkin--huxley revisited: reparametrization and identifiability analysis of
  the classic action potential model with approximate bayesian methods}, Royal
  Society open science, 2 (2015), p.~150499.

\bibitem{fox2012tutorial}
{\sc C.~W. Fox and S.~J. Roberts}, {\em A tutorial on variational bayesian
  inference}, Artificial intelligence review, 38 (2012), pp.~85--95.

\bibitem{gelman2014bayesian}
{\sc A.~Gelman, J.~B. Carlin, H.~S. Stern, D.~Dunson, A.~Vehtari, and D.~B.
  Rubin}, {\em Bayesian data analysis (3rd edition)}, Chapman \& Hall/CRC,
  2013.

\bibitem{gramacy2015local}
{\sc R.~B. Gramacy and D.~W. Apley}, {\em Local gaussian process approximation
  for large computer experiments}, Journal of Computational and Graphical
  Statistics, 24 (2015), pp.~561--578.

\bibitem{kandasamy2015bayesian}
{\sc K.~Kandasamy, J.~Schneider, and B.~Poczos}, {\em Bayesian active learning
  for posterior estimation}, in Proceedings of the 24th International
  Conference on Artificial Intelligence, AAAI Press, 2015, pp.~3605--3611.

\bibitem{kennedy2001bayesian}
{\sc M.~C. Kennedy and A.~O'Hagan}, {\em Bayesian calibration of computer
  models}, Journal of the Royal Statistical Society: Series B (Statistical
  Methodology), 63 (2001), pp.~425--464.

\bibitem{lillacci2010parameter}
{\sc G.~Lillacci and M.~Khammash}, {\em Parameter estimation and model
  selection in computational biology}, PLoS Comput Biol, 6 (2010), p.~e1000696.

\bibitem{lueckmann2017flexible}
{\sc J.-M. Lueckmann, P.~J. Goncalves, G.~Bassetto, K.~{\"O}cal,
  M.~Nonnenmacher, and J.~H. Macke}, {\em Flexible statistical inference for
  mechanistic models of neural dynamics}, in Advances in neural information
  processing systems, 2017, pp.~1289--1299.

\bibitem{marjoram2003markov}
{\sc P.~Marjoram, J.~Molitor, V.~Plagnol, and S.~Tavar{\'e}}, {\em Markov chain
  monte carlo without likelihoods}, Proceedings of the National Academy of
  Sciences, 100 (2003), pp.~15324--15328.

\bibitem{meeds2014gps}
{\sc E.~Meeds and M.~Welling}, {\em Gps-abc: Gaussian process surrogate
  approximate bayesian computation}, in Proceedings of the Thirtieth Conference
  on Uncertainty in Artificial Intelligence, UAI'14, Arlington, Virginia, USA,
  2014, AUAI Press, pp.~593--602.

\bibitem{papamakarios2016fast}
{\sc G.~Papamakarios and I.~Murray}, {\em Fast $\varepsilon$-free inference of
  simulation models with bayesian conditional density estimation}, in
  Proceedings of the 30th International Conference on Neural Information
  Processing Systems, 2016, pp.~1036--1044.

\bibitem{papamakarios2019sequential}
{\sc G.~Papamakarios, D.~Sterratt, and I.~Murray}, {\em Sequential neural
  likelihood: Fast likelihood-free inference with autoregressive flows}, in The
  22nd International Conference on Artificial Intelligence and Statistics,
  PMLR, 2019, pp.~837--848.

\bibitem{pospischil2008minimal}
{\sc M.~Pospischil, M.~Toledo-Rodriguez, C.~Monier, Z.~Piwkowska, T.~Bal,
  Y.~Fr{\'e}gnac, H.~Markram, and A.~Destexhe}, {\em Minimal hodgkin--huxley
  type models for different classes of cortical and thalamic neurons},
  Biological cybernetics, 99 (2008), pp.~427--441.

\bibitem{sacks1989design}
{\sc J.~Sacks, W.~J. Welch, T.~J. Mitchell, and H.~P. Wynn}, {\em Design and
  analysis of computer experiments}, Statistical science,  (1989),
  pp.~409--423.

\bibitem{sisson2018handbook}
{\sc S.~A. Sisson, Y.~Fan, and M.~Beaumont}, {\em Handbook of approximate
  Bayesian computation}, Chapman and Hall/CRC, 2018.

\bibitem{sisson2007sequential}
{\sc S.~A. Sisson, Y.~Fan, and M.~M. Tanaka}, {\em Sequential monte carlo
  without likelihoods}, Proceedings of the National Academy of Sciences, 104
  (2007), pp.~1760--1765.

\bibitem{thomas2020likelihood}
{\sc O.~Thomas, R.~Dutta, J.~Corander, S.~Kaski, M.~U. Gutmann, et~al.}, {\em
  Likelihood-free inference by ratio estimation}, Bayesian Analysis,  (2020).

\bibitem{voit2004decoupling}
{\sc E.~O. Voit and J.~Almeida}, {\em Decoupling dynamical systems for pathway
  identification from metabolic profiles}, Bioinformatics, 20 (2004),
  pp.~1670--1681.

\bibitem{wang2018adaptive}
{\sc H.~Wang and J.~Li}, {\em Adaptive gaussian process approximation for
  bayesian inference with expensive likelihood functions}, Neural computation,
  30 (2018), pp.~3072--3094.

\bibitem{williams2006gaussian}
{\sc C.~K. Williams and C.~E. Rasmussen}, {\em Gaussian processes for machine
  learning}, MIT Press, 2006.

\bibitem{wood2010}
{\sc S.~N. Wood}, {\em {Statistical inference for noisy nonlinear ecological
  dynamic systems}}, Nature, 466 (2010), pp.~1102--1104.

\bibitem{wood2010statistical}
{\sc S.~N. Wood}, {\em Statistical inference for noisy nonlinear ecological
  dynamic systems}, Nature, 466 (2010), pp.~1102--1104.

\bibitem{wu2016surrogate}
{\sc K.~Wu and J.~Li}, {\em A surrogate accelerated multicanonical monte carlo
  method for uncertainty quantification}, Journal of Computational Physics, 321
  (2016), pp.~1098--1109.

\end{thebibliography}

\end{document}